\newcommand{\suppress}[1]{}
\theoremstyle{plain}
\newtheorem{theorem}{Theorem}
\newtheorem{lemma}[theorem]{Lemma}
\newtheorem{prop}[theorem]{Proposition}
\theoremstyle{definition}
\newtheorem{example}[theorem]{Example}
\theoremstyle{remark}
\newtheorem{remark}[theorem]{Remark}
\newcommand{\rset}{\mathbb R}
\numberwithin{equation}{section}
\newcommand{\be}{\begin{equation}}
\newcommand{\ee}{\end{equation}}
\newcommand{\bea}{\begin{eqnarray}}
\newcommand{\eea}{\end{eqnarray}}
\newcommand{\bean}{\begin{eqnarray*}}
\newcommand{\eean}{\end{eqnarray*}}
\newcommand{\da}{\dagger}
\DeclareMathOperator{\Span}{span}
\DeclareMathOperator{\Tr}{Tr}
\DeclareMathOperator{\LAP}{\mathscr{L}}
\newcommand{\q}{{q}}
\newcommand{\ket}[1]{|#1\rangle}
\newcommand{\bra}[1]{\langle#1|}
\newcommand{\Lap}{{L_C}}
\newcommand{\RG}{E}
\newcommand{\tpsi}{\tilde{\psi}}
\newcommand{\tW}{\tilde{W}}
\newcommand{\tw}{\tilde{w}}
\newcommand{\tL}{\tilde{L}}
\newcommand{\hb}{\hat{b}}
\newcommand{\Pp}{P}
\begin{document}
\title[The Invisible Hand of Laplace]{The Invisible Hand of Laplace: the Role of Market Structure in Price Convergence and Oscillation}

\author{Yuval Rabani}\thanks{YR was supported in part by ISF grant 856-11, by BSF grant 2012333, and by I-CORE Algo.
LJS was supported in part by NSF grant 1319745 and by BSF grant 2012333.
Addresses: The Hebrew University of Jerusalem, The Rachel and Selim Benin School of
Computer Science and Engineering, 9190401 Jerusalem, Israel, {\tt yrabani@cs.huji.ac.il};
Caltech, Engineering and Applied Science MC305-16, Pasadena CA 91125, USA, {\tt schulman@caltech.edu}. }
\author{Leonard J. Schulman}

\begin{abstract} 
A fundamental question about a market is under what conditions, and then how
rapidly, does price signaling cause price equilibration. Qualitatively, this ought to depend on how well-connected the market is. We address this question quantitatively for a certain class of Arrow-Debreu markets with continuous-time proportional t\^{a}tonnement dynamics. We show that the algebraic connectivity of the market determines the effectiveness of price signaling equilibration. This also lets us study the rate of external noise that a market can tolerate and still maintain near-equilibrium prices.
 \end{abstract}
\maketitle

\section{Introduction}
In a free market, the rise or fall of a price signals excess demand or supply; if the dynamics of price adjustment work well, signaling can enable goods to clear and prices to equilibrate.
A fundamental question about a market is therefore
under what conditions, and
then how rapidly, does price signaling cause price equilibration. It is a stubborn question however how best to model price dynamics. A
variety of both t\^{a}tonnement (non-trading)~\cite{Walras74}
and trading processes have
been offered as models. The question has been particularly studied in the context of Arrow-Debreu markets~\cite{McKenzie54,ArrowD54}: it was shown early on that
continuous-time
t\^{a}tonnement
converges to an equilibrium if goods are gross substitutes~\cite{ABH59} but otherwise, as Scarf~\cite{Scarf60} exhibited, it may not.
This spurred the study of alternative forms of t\^{a}tonnement~\cite{Scarf67,Kuhn68,Smale76,vanderLaanT87,Kamiya90,Herings97,JoostenT98,beiGH15}, which were shown to have stronger convergence properties.
Neither t\^{a}tonnement
 nor existing trading models (see below), however, are fully reductionist theories; that is, we do not have a
model of individual strategic
transactions from which emerges at the market level an Arrow-Debreu market with the stated dynamics and which equilibrates to the given endowments.

Despite this difficulty in the theory, price signaling in practice certainly works to stabilize prices.
In a typical market, over long periods,
prices oscillate with mild amplitude within a near-equilibrium zone and goods roughly clear.
After any shock, the market restores itself to a possibly new near-equilibrium zone. This is after all the rationale for studying equilibrium theories. Moreover, in the last 15 years laboratory evidence has accumulated in support of t\^{a}tonnement dynamics~\cite{Plott00,AndersonPSG04,HirotaHPR05,Gjerstad07,CrockettOP11} even in markets such as Scarf's or Gale's~\cite{Gale63} where it makes surprising predictions. In short,
there is no question but that the ``invisible hand'' of price signaling does operate within free markets. A good economic theory ought to capture quantitative aspects of its operation. Such quantification should include information about rates of convergence, at least in the vicinity of a stable equilibrium; as well as, ideally, also information about the typical
out-of-equilibrium modes that may empirically be observed, since modes which do not damp rapidly can be expected to be continually present due to external stimulation.
We emphasize that a true quantitative theory cannot be one-sided (bounding the rate of convergence only from above or only from below)---if we wish to compare two markets, or understand whether oscillations are due to stimulation or are long-lived modes of the system, we need both \textit{upper} and \textit{lower} bounds.

In this paper we take a step toward such quantification. Given the already-discussed challenge of modeling markets out of equilibrium,
our approach is very conservative: we work entirely within the Arrow-Debreu market framework, and restrict ourselves to the arguably simplest price dynamic, continuous-time proportional-t\^{a}tonnement (CTPT)
as in Samuelson~\cite{Samuelson41}. For an overview of this topic see~\cite{McKenzie02,Mukherji02}, and for experimental evidence for this particular class of dynamics in trading markets, see~\cite{HirotaHPR05,CrockettOP11}. (It is notable that the t\^{a}tonnement process is predictive of trading dynamics despite formally involving no trade.)
We restrict ourselves to a setting where these dynamics converge, namely, the gross substitutes regime~\cite{ABH59} (see also~\cite{AH58,Uzawa61}) and, since we are asking for rather precise results, we assume that utilities take on a predictable form; for this purpose we again make the most standard assumption possible, namely that agents have CES utilities.
 Furthermore, since we wish to establish both
upper and lower
bounds on rate of convergence,
 we restrict ourselves entirely to system behaviour in a neighborhood of equilibrium. It may be possible to extend our results to the entire space of prices, but one would have to argue that this is economically well-motivated, since the simple CTPT rule has many consistent extensions away from equilibrium. We follow Tolstoy's famous dictum about families but, not being novelists, focus our attention on \textit{happy} markets, which for us means, those near equilibrium.

We have, therefore, eliminated almost all parameters available to tweak the market model, except the parameter we wish to study: the \textit{connectivity structure} of the market. One expects that a market in which all participants interact pairwise will have relatively sharp and rapid reactions to price imbalances; whereas one in which participants have only indirect effects on most others,
will adjust more moderately and slowly. We shall quantify this phenomenon fairly precisely through
the spectrum of a Laplacian matrix that is derived from the connectivity structure of the market. Specifically, the second eigenvalue of this matrix, also known as the \textit{algebraic connectivity} of the network, will be shown to determine, in most cases, the convergence rate of market dynamics.
If the market is continually buffeted by external noise at a fixed rate, then the convergence rate determines in turn the typical distance of prices from equilbrium, and this too will be quantified.
 More precise statement awaits the definitions in the next section.

The Laplacian, also called the heat kernel, originates in Physics (in the study of heat diffusion), in Probability (in the study of random walk), and in Electrical Engineering (in the study of resistive networks); in the last decades it has found wide applicability in Finance, Combinatorics and Theoretical Computer Science. We are not, however, aware of other applications of it to the study of  structure of economic markets.

\subsection{Other related work}
Before moving on to our development, we mention that not all study of market dynamics has been on non-trading processes; out-of-equilibrium trade models have also been developed and shown to converge, such as the Edgeworth process~\cite{Uzawa62} or the Hahn process~\cite{Hahn62,HahnN62}; however, these are less relevant to our study for several reasons, including that the former requires coordination of large coalitions; and the latter converges to an equilibrium that generally does not agree with the specified endowments. For more see~\cite{Fisher83}. Simply put, although it would be desirable to pursue our topic in a trading process, no model with all the needed properties has yet been found. Perhaps, as we focus our study in a neighborhood of equilibrium, the distinctions between models are however not so significant.

The history of work on the stability of t\^atonnement begins with Hicks~\cite{Hicks39}, who discussed the local stability of t\^atonnement under some conditions on the market. Samuelson~\cite{Samuelson41,Samuelson44} showed that the Hicksian conditions are neither sufficient nor necessary for stability. Metzler~\cite{Metzler45} showed that the Hicksian conditions are sufficient for stability in the case of gross substitutes utilities. 

We also mention a recent line of work  in theoretical computer science on t\^atonnement~\cite{CMV05,FGKKS08,CF08,BDX11,CCR12,CCD13,ARY15} and related processes~\cite{zhang09}. These papers consider Fisher markets with CES utilities (not necessarily gross substitutes) and also some cases beyond this class of utility functions. Fisher markets are special-case Arrow-Debreu markets, in which the total initial endowments of goods is split among all the participants in the same proportions for all goods. Therefore, changing the relative prices of the goods does not change the relative purchasing power of the participants, only their desired consumption basket. Such markets are attractive from a computational perspective, as there are efficient algorithms to compute equilibrium prices and allocations. The above-mentioned papers propose several
discrete time interpretations of CTPT, and establish global
 {\em upper}
bounds on the
convergence time of discrete-time t\^atonnement. The highlight of this line of work from our perspective is the paper~\cite{CCD13} that relates discrete-time t\^atonnement to the convex optimization method of gradient descent, and upper bounds the  convergence time across the CES spectrum (including the complementary case of $\rho < 0$).
The results in these papers are quite incomparable with ours. On the positive side, the results
hold for a wider range of utility functions and apply to the entire space of prices, not only near equilibrium.
On the negative side, the results apply only to Fisher markets; and the most essential distinction is that these works do not
address our main question of quantifying the rate of convergence in terms of the market structure---in fact their framework does not
consider the market structure at all, and they only provide worst-case one-sided (upper) bounds on the convergence time.

In a different vein, researchers have been interested in other market structure effects: as these works do not directly impact ours, we do not attempt a survey, but only provide a few pointers:~\cite{krantonM01} looks at markets in which buyer-seller pairs can trade only along established links, at the incentives to form such links, and at the efficiency of trade in such networks;~\cite{kakadeKO04} considers Arrow-Debreu markets in which, again, direct trade can occur only along established links, thus enabling the same commodity to have different prices in different places; and~\cite{blumeEKT09} looks at markets in which buyer-seller pairs can only interact through intermediary traders, and studies the power of these traders and how equilibrium prices are affected by the connectivity structure.

\section{The model} \label{model}

We consider an Arrow-Debreu market in which each participant $i$ is endowed with a quantity $s_i>0$ of a unique perfectly divisible good, also denoted by $i$. There are $n \geq 2$ participants, and participant $i$ values an allocation $x_{ij}$ of the goods $j$ according to a CES utility function:
\[ u_i(x) = \left(\sum_j (c_{ij}x_{ij})^\rho\right)^{1/\rho} \]
The parameter $\rho$ is shared by all and is in the gross substitutes regime, $\rho \in (0,1)$.
Simply by rescaling units, we may suppose that all supplies $s_i=1$; this entails replacing any $c_{ij}$ in the original market by $c_{ij}s_j$ in the rescaled market.
The coefficients $c_{ij}$ can vary widely but cannot be entirely general; they must satisfy the following three conditions. (a) Nonnegativity: $c_{ij} \geq 0$.
(b) Connectedness of the market: for every $i,j$ there are $i=i_0,i_1,\ldots,i_k=j$ s.t.\ $\prod_{\ell=1}^k c_{i_{\ell-1}i_\ell}>0$.
 (c) Circulation-free: let $i_0,i_1,\ldots,i_k=i_0$ be any cycle through the vertices. Then $\prod_{\ell=1}^k c_{i_{\ell-1}i_\ell}=\prod_{\ell=1}^k c_{i_\ell i_{\ell-1}}$. (Observe that these conditions are not affected by the rescaling of the supplies.)

It is a consequence of (b) and (c) that the coefficients are weakly undirected in the sense that $c_{ij}=0$ if and only if $c_{ji}=0$. Consequently,
it will be useful to conceive of the participants as vertices of a graph, with $i,j$ connected by an undirected edge if $c_{ij}>0$. There may be self-loops in this graph.
Let $C_{ij}=c_{ij}^\delta$. Let $U$ be the adjacency matrix of this graph (the ``unweighted'' market graph), defined by $U_{ij}=1$ if $C_{ij}>0$, $U_{ij}=0$ if $C_{ij}=0$.

It will simplify expressions to replace the customary parameter $\rho$ by $\delta=\rho/(1-\rho)$ (note that $1/(1-\rho)$ is the {\em elasticity of substitution}, indicating the extent that desired goods serve as substitutes for each other). In all theorems in this paper, $\delta$ is an arbitrary value in $ (0,\infty)$; this corresponds to $\rho$ being an arbitrary value in $ (0,1)$.

In the Arrow-Debreu model, at prices $p_i$, not all $0$, participant $i$ has budget $b_i=p_i$,
 which is then allocated to goods $j$ so as to optimize basket utility; this results in the following demand by $i$ for $j$:
\be d_{ij}(p)=
\frac{p_i C_{ij}}{p_j^{1+\delta} \sum_k C_{ik}/p_k^{\delta}}
\label{demandG} \ee
(the argument $p$ will generally be understood and we will abbreviate to $d_{ij}$). 
Observe that these demands are invariant under rescaling the coefficients $C_{ij}$ by any positive multipliers $\alpha_i$. The dynamics (to be described in Section~\ref{sec:dr}) depend only on the demands and supplies, so we from now on rescale the $C_{ij}$ such that for every $i$, $\min_{j: C_{ij}>0} C_{ij}=1$ (and of course some $C_{ij}>0$). This rescaling, too, preserves the conditions (a,b,c).

In summary, for any given $\delta$ the market is fully specified by the data $C$.

We let $\Pp$ be the following function of prices $p$: $\Pp_i(p)=\sum_k \frac{C_{ik}}{p_k^\delta}$. So
$  d_{ij}=
\frac{p_i C_{ij}}{p_j^{1+\delta} \Pp_i(p)}$.

Let $d_j=\sum_i d_{ij}$ denote the total demand for good $j$. Prices are in equilibrium if $d_j=1$ for all $j$. Throughout the paper we use $r$ to denote a vector of equilibrium prices, and $R_i=\Pp_i(r)=\sum_k \frac{C_{ik}}{r_k^\delta}$; then the equilibrium condition can be rewritten as the following system of equations:
\be 1=\frac{1}{r_j^{1+\delta}} \sum_{i } \frac{r_i C_{ij}}{R_i} \label{demand-ruleG2} \ee
with
\[ d_{ij}(r) =\frac{1}{r_j^{1+\delta}} \frac{r_i C_{ij}}{R_i}
\]
being the equilibrium demand by $i$ for $j$.

\section{The dynamics and our results} \label{sec:dr}
In the subsequent Section~\ref{seq:eq} we justify
existence, uniqueness, and certain properties of the equilibrium vector of prices, denoted $r$. The focus of the paper is dynamics in a neighborhood of $r$; we now describe those dynamics and our results.

\subsection{Dynamics}
As indicated earlier, the dynamics are
proportional t\^{a}tonnement in continuous time,
specifically, for a price vector $p$,
\be \dot{p}_j = (d_j-1) p_j \label{dyn:p} \ee
and we study these dynamics in a perturbation of equilibrium, $p_j=r_j e^{\alpha_j}$ for small $\alpha$. Then
\be \dot{\alpha}_j = \frac{\dot{p}_j}{p_j} \label{dyn:alpha} \ee
If $c_j$ is a function of $1 \leq j \leq n$, $\ket{c}$ or $\ket{c_j}\in \rset^n$ denotes a column vector with $j$'th entry $c_j$; if $c$ is a scalar, $\ket{c}\in \rset^n$ denotes a column vector with all entries $c$. At equilibrium, by definition, $\dot{p}=\ket{0}$, so, referring to the dynamics of Eq.~\eqref{dyn:alpha}, we have
 $\dot{\alpha}=\ket{0}$ at $\alpha=\ket{0}$.
Consequently
\[ \dot{\alpha}_j = \sum_i
 \left.\frac{\partial d_j}{\partial \alpha_i} \right|_{\ket{0}} \cdot \alpha_i 
\]
From now on we abbreviate $D_{ji}=\left. \frac{\partial d_j}{\partial \alpha_i} \right|_{\ket{0}} $,
so the above becomes 
\be \ket{\dot{\alpha}} = D \ket{\alpha} .\label{alphaD} 
\ee
Any scaling of $r$ is an equilibrium, so
\be \ket{0} = D \ket{1}. \label{alpha-ker} \ee

\subsection{Results} \label{subsec:results} It is very familiar in dynamical systems that the controlling parameter in dynamics of the form Eq.~\eqref{alphaD} is the maximum real part of any eigenvalue of $D$; the system is unstable if the maximum is positive. As we will see below, the eigenvalues of our operator $D$ are real, and so may be indexed
$\lambda_{\downarrow 1}(D) \geq \ldots \geq \lambda_{\downarrow n}(D)$.
As we see from Eq.~\eqref{alpha-ker}, however, one of these eigenvalues is $0$. This particular eigenvalue is irrelevant to our considerations: it merely expresses that the problem is scale-invariant in the prices, and so equilibrium prices form a ray. We will express this in the dynamics by writing $\alpha(t)$ for the price vector perturbation at time $t$, and decomposing this into a part proportional to $\ket{1}$ which is unchanging, and a remainder, as follows: $\alpha(t)=\ket{c} + \bar{\alpha}(t)$ for some real $c$. The key will be to understand the dynamics of $\bar{\alpha}(t)$.

We will also show below that all eigenvalues are nonpositive. Consequently, the quantity of true interest to us is $\lambda_{\downarrow 2}(D)$. This quantity has the following meaning: there is an invertible matrix $B$ (to be introduced below) such that, with $\| x\|_B := \sqrt{\bra{x} B^\da B \ket{x}}$,

(i) For all $\bar{\alpha}$,
$\|\bar{\alpha}(t) \|_B \leq \|\bar{\alpha}(0) \|_B \cdot e^{\lambda_{\downarrow 2}(D)t}$.

(ii) For some $\bar{\alpha} \neq 0$,
$\|\bar{\alpha}(t) \|_B = \|\bar{\alpha}(0) \|_B \cdot e^{\lambda_{\downarrow 2}(D)t}$.

 Thus $\lambda_{\downarrow 2}(D)$ can be thought of as the \textit{damping rate} for perturbations from equilibrium; alternatively,
 $\frac{-\log 2}{\lambda_{\downarrow 2}(D)}$ is the \textit{convergence time} of the market, the time in which any perturbation from equilibrium will halve in norm (in a certain preferred basis).

Before giving the results it is necessary to introduce formally the notion of the Laplacian (or sometimes called the normalized Laplacian) of a weighted graph. A weighted graph is, for this purpose, a real symmetric matrix $A$ with nonnegative entries, in which no row is $0$. Rows and columns are indexed by vertices of the graph and we say that $i \sim j$, $i$ and $j$ are connected by an edge, if $A_{ij}>0$. The Laplacian corresponding to $A$ is the matrix $\LAP(A)$ defined as follows. Define $a=\sigma(A)$ to be the diagonal matrix with entries $a_{ii}> 0$ for $a_{ii}^2=\sum_j A_{ij}$. Then $\LAP(A) = I - a^{-1} A a^{-1}$. It is well known that for any $A$,
$0=\lambda_{\uparrow 1}(\LAP(A)) \leq \ldots \leq \lambda_{\uparrow n}(\LAP(A)) \leq 2$, and that
 the rank of the kernel of $\LAP(A)$ is the number of connected components of $A$. Note that $\LAP$ is homogeneous of degree $0$, i.e., invariant to scaling of its argument.

We will be discussing the Laplacians of several different weighted graphs. The most important, to be called $\Lap$, quantifies exactly the damping rate of the market. The following proposition is therefore central to the paper.

\begin{prop}[Damping rate characterization] \label{L-expansion}
 Let
\be \q(\delta,\lambda)=-(1+2\delta)\lambda+\delta \lambda^2. \label{q-def} \ee
There is a diagonal matrix $B$ such that
\[ D=B^{-1} \cdot q(\delta,\Lap) \cdot B \]
and therefore the damping rate is 
$\lambda_{\downarrow 2}(D) = \lambda_{\downarrow 2}(q(\delta,\Lap))$.
\end{prop}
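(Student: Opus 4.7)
The plan is to identify the diagonal matrix $B$ and the weighted graph whose normalized Laplacian is $\Lap$, then verify the asserted similarity by direct computation.

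First I would compute $D$ explicitly. Writing $f_{ij}:=d_{ij}(r)$, applying the chain rule to $\ln d_{ij}$ with $p_j=r_je^{\alpha_j}$, evaluating at $\alpha=\ket{0}$, and summing over $i$ using market clearing $\sum_i f_{ij}=1$ gives after routine algebra
\[ D_{jm} = f_{mj}-(1+\delta)[j=m]+\delta\,r_m\sum_i\frac{f_{ij}f_{im}}{r_i}. \]
The three-term shape matches the expansion $q(\delta,\Lap)=-(1+\delta)I+K+\delta K^2$ with $K=a^{-1}Aa^{-1}$, so I need to produce a symmetric $A$ and diagonal $B$ for which all three terms align after conjugation.

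The pivotal ingredient is an equilibrium identity. Condition~(c) is equivalent to the existence of a positive reversibility measure $\pi$ with $\pi_iC_{ij}=\pi_jC_{ji}$, and I would invoke from Section~\ref{seq:eq} the identity $\tau_j:=r_j^{1+\delta}/(\pi_jR_j)=c$, independent of $j$: plugging the ansatz $r_j^{1+\delta}=c\pi_jR_j$ into~\eqref{demand-ruleG2} reduces via reversibility to itself, so any $r$ satisfying the ansatz is an equilibrium, and by uniqueness of equilibrium up to scaling every equilibrium satisfies it. A direct expansion shows $S_{ij}/S_{ji}=\tau_i/\tau_j$ for the monetary flow matrix $S_{ij}:=r_jf_{ij}$, so $\tau_j=c$ makes $S$ symmetric; its row sums are $\sum_j S_{ij}=r_i$ by Walras.

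I would then take $A=S$, so $a=\sigma(S)$ has $a_{ii}=\sqrt{r_i}$, and set $B=a$. Expanding
\[ B^{-1}q(\delta,\Lap)B = -(1+\delta)I + a^{-2}S + \delta(a^{-2}S)^2, \]
symmetry of $S$ gives $(a^{-2}S)_{jm}=S_{jm}/r_j=r_mf_{jm}/r_j=f_{mj}$, matching the linear term of $D_{jm}$; and a short calculation $((a^{-2}S)^2)_{jm}=(r_m/r_j)\sum_k f_{jk}f_{km}=r_m\sum_k f_{kj}f_{km}/r_k$ (again via symmetry of $S$) matches the $\delta$-term. The $-(1+\delta)I$ piece matches tautologically, so $D=B^{-1}q(\delta,\Lap)B$, and the eigenvalue statement follows from similarity. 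The main obstacle is the equilibrium identity $\tau_j=c$: without the resulting symmetry of $S$, neither term of $D$ would admit clean matching, whereas all remaining steps are entry-level bookkeeping with the chain rule and matrix products.
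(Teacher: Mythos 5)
Your proposal is correct, and all the entrywise checks go through: your formula $D_{jm}=f_{mj}-(1+\delta)[j=m]+\delta r_m\sum_i f_{ij}f_{im}/r_i$ agrees with the paper's computed $D_{jj}$ and $D_{jk}$ once the detailed balance relation \eqref{r-soln} is used, and your matrix $S_{ij}=r_j d_{ij}(r)$ satisfies $a^{-1}Sa^{-1}=\ell$ as defined in \eqref{ell-defn}, so your $\LAP(S)$ is exactly the paper's $\Lap$ (indeed $S$ is a scalar multiple of the paper's $W=B\ell B$), and your $B=\diag(\sqrt{r_i})$ is the paper's $B$ of \eqref{rho-defn} up to the harmless global factor $\sqrt{r_{i_0}^{1+\delta}/R_{i_0}}$, cf.\ Eq.~\eqref{Bagain}. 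The route is genuinely different in organization, though the engine is the same (differentiate demand at equilibrium, then symmetrize by a diagonal conjugation justified by detailed balance). The paper works entirely in the raw variables $C,r,R,\psi$: it computes $D_{jj}$ and $D_{jk}$, conjugates by the explicitly guessed $B$ involving $\psi$, and only in the following section discovers that $\Lap$ is the normalized Laplacian of a weighted graph with $\sigma(W)=B$. You instead package everything through the equilibrium demands $f_{ij}$ and the monetary-flow matrix $S$, whose symmetry \emph{is} detailed balance and whose row sums are the budgets $r_i$; this makes the choice of $B$ and of the weighted graph canonical rather than an ansatz, shortens the bookkeeping, and supplies a pleasant economic interpretation of the edge weights (equilibrium payments) that the paper leaves implicit. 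What the paper's longer computation buys is the explicit $C$-form of $\ell$, which is what Lemmas~\ref{rratio} and~\ref{lbd} and the comparison theorems later exploit. One caveat: your parenthetical justification of the identity $\tau_j=r_j^{1+\delta}/(\pi_j R_j)=c$ (``any $r$ satisfying the ansatz is an equilibrium, hence by uniqueness every equilibrium satisfies it'') is incomplete as stated, since it silently requires that the ansatz have \emph{some} solution; that existence is exactly the Brouwer fixed-point argument of Lemma~\ref{det-balG}. Since you invoke Section~\ref{seq:eq}, i.e.\ Theorem~\ref{exun}, this does not affect the validity of your proof, but the shortcut is not a self-contained replacement for that lemma.
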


It should be said however that $\Lap$
has a rather complicated dependence on $C$; and that $C$ itself may be hard to know precisely. We will therefore devote attention to obtaining plainer bounds. 

There is one very special case however in which $\Lap$ is easy to write down: the ``uniform" case,
 in which all coefficients $C_{ij}$ are either $0$ or $1$ and every participant has the same ``degree'', that is, there is an integer $\Delta>1$ s.t.\ for all $i$, $|\{j: C_{ij}=1\}|=|\{j: C_{ji}=1\}|=\Delta$. In this case (a) equilibrium prices are uniform, (b) $\Lap=\LAP(U)$. The calculations in this case are simpler than in general and allow for the following statement.
\begin{theorem}[Bounds for uniform markets] In the special case just defined,
\be
\q(\delta,\lambda_{\uparrow 2}(\LAP(U)))
\leq
 \lambda_{\downarrow 2}(D)
\leq
 \max\{q(\delta,\lambda_{\uparrow 2}(\LAP(U))),-2\}
 \ee
with the lower bound being tight if $\delta \leq 1/2$ or in other cases discussed in Section~\ref{sec:genbds}.
\label{thm:unif} \end{theorem}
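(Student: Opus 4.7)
The plan is to reduce the theorem to an elementary one-variable analysis of the scalar polynomial $\q(\delta,\cdot)$ evaluated at the spectrum of $\LAP(U)$. By Proposition \ref{L-expansion} combined with the identity $\Lap=\LAP(U)$ in the uniform case (asserted just above the theorem), $D$ is similar to $\q(\delta,\LAP(U))$, so $\lambda_{\downarrow 2}(D) = \lambda_{\downarrow 2}(\q(\delta,\LAP(U)))$. Since $\LAP(U)$ is symmetric with spectrum $0=\lambda_{\uparrow 1}<\lambda_{\uparrow 2}\leq\cdots\leq \lambda_{\uparrow n}\leq 2$ (the $0$ being simple because $U$ is connected), the eigenvalues of $\q(\delta,\LAP(U))$ are exactly the numbers $\q(\delta,\lambda_{\uparrow i}(\LAP(U)))$. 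So the whole question becomes one of comparing values of a scalar parabola on a finite subset of $[0,2]$.

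The next step is to record the shape of $\q(\delta,\lambda)=\delta\lambda^2-(1+2\delta)\lambda$. Its second derivative is $2\delta>0$, so it is convex; its roots are $0$ and $(1+2\delta)/\delta>2$, so $\q(\delta,\cdot)<0$ throughout $(0,2]$; and its unique minimizer is $\lambda^\ast=(1+2\delta)/(2\delta)$, which lies in $[2,\infty)$ iff $\delta\leq 1/2$. The endpoint value $\q(\delta,2)=-2$ is the source of the $-2$ in the theorem. Because $\q(\delta,\cdot)$ attains its maximum $0$ on $[0,2]$ only at $\lambda_{\uparrow 1}=0$, the top eigenvalue of $\q(\delta,\LAP(U))$ is this kernel value $0$, and therefore
\[ \lambda_{\downarrow 2}(D) \;=\; \max_{i\geq 2}\q(\delta,\lambda_{\uparrow i}(\LAP(U))). \]

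By convexity, the maximum of $\q(\delta,\cdot)$ over any finite subset of $[0,2]$ is attained at an extreme point of that subset, so in fact
\[ \lambda_{\downarrow 2}(D) \;=\; \max\bigl(\q(\delta,\lambda_{\uparrow 2}(\LAP(U))),\ \q(\delta,\lambda_{\uparrow n}(\LAP(U)))\bigr). \]
The lower bound in the theorem is now immediate. For the upper bound I would split on whether $\lambda_{\uparrow n}\leq\lambda^\ast$ or not. If $\lambda_{\uparrow n}\leq\lambda^\ast$ (which automatically holds whenever $\delta\leq 1/2$, since then $\lambda^\ast\geq 2$), monotonicity of $\q(\delta,\cdot)$ on $[0,\lambda^\ast]$ gives $\q(\delta,\lambda_{\uparrow n})\leq \q(\delta,\lambda_{\uparrow 2})$, so the maximum above reduces to $\q(\delta,\lambda_{\uparrow 2})$ and the lower bound is in fact tight. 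If instead $\lambda_{\uparrow n}\in(\lambda^\ast,2]$, then $\q(\delta,\cdot)$ is increasing on $[\lambda^\ast,2]$, so $\q(\delta,\lambda_{\uparrow n})\leq \q(\delta,2)=-2$; combining both cases yields $\lambda_{\downarrow 2}(D)\leq\max(\q(\delta,\lambda_{\uparrow 2}(\LAP(U))),-2)$.

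There is no real conceptual obstacle here beyond the case split on $\lambda^\ast$; once Proposition \ref{L-expansion} and the identification $\Lap=\LAP(U)$ are in hand, the argument is pure calculus on a parabola driven by where the eigenvalues of $\LAP(U)$ fall relative to the turning point $\lambda^\ast=(1+2\delta)/(2\delta)$. The tightness clause of the theorem corresponds exactly to the first case above: tightness holds whenever $\lambda_{\uparrow n}(\LAP(U))\leq\lambda^\ast$, which is automatic for $\delta\leq 1/2$ and, for larger $\delta$, becomes a structural condition on $U$ that is presumably what Section \ref{sec:genbds} elaborates upon.
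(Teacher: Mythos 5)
Your proposal is correct, and its core is the same argument the paper uses: Proposition~\ref{L-expansion} reduces everything to evaluating the parabola $\q(\delta,\cdot)$ on the spectrum of the relevant Laplacian, and your analysis around the vertex $\lambda^\ast=1+\frac{1}{2\delta}$ together with $\q(\delta,2)=-2$ is exactly the content of the paper's Eq.~\eqref{bdb-lap}, i.e.\ of the proof of Theorem~\ref{precise} (your reorganization via convexity, locating the maximum at $\lambda_{\uparrow 2}$ or $\lambda_{\uparrow n}$ and splitting on $\lambda_{\uparrow n}\lessgtr\lambda^\ast$, is equivalent to the paper's eigenvalue-by-eigenvalue case split, and your tightness criterion $\lambda_{\uparrow n}\le\lambda^\ast$ is the paper's condition (a), with $\delta\le 1/2$ handled the same way). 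The one point you should not take for granted is the identification $\Lap=\LAP(U)$ (and the uniformity of equilibrium prices): the sentence just above the theorem asserts it without proof, and verifying it is precisely what the paper's own proof of Theorem~\ref{thm:unif} consists of. The fill-in is short: in the uniform case $C$ is a symmetric $0/1$ matrix with all row sums equal to $\Delta$, so $r=\ket{1}$ satisfies the detailed balance conditions Eq.~\eqref{r-soln} and is therefore, by Theorem~\ref{exun}, the unique equilibrium; then $\psi_j\equiv 1$, $R_j=\Delta$ for all $j$, $\ell=U/\Delta$, and $B=\Delta^{-1/2}I$ is scalar, so $\Lap=I-U/\Delta=\LAP(U)$ and $D$ is cospectral with $\q(\delta,\LAP(U))$. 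With that paragraph inserted, your parabola analysis completes the proof and matches the paper's reasoning.
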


The general case is given in Theorem~\ref{precise}: it is a strict generalization of Theorem~\ref{thm:unif}, given in terms of the aforementioned Laplacian $\Lap$, whose definition will be given in Section~\ref{sec:dyn}.

\begin{theorem}[Bounds for more general markets]
$q(\delta,\lambda_{\uparrow 2}(\Lap ))
\leq
 \lambda_{\downarrow 2}(D)
\leq \max\{
q(\delta,\lambda_{\uparrow 2}(\Lap )), -2\}$.
\label{precise} \end{theorem}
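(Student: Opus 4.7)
The plan is to invoke Proposition~\ref{L-expansion} and reduce the claim about $\lambda_{\downarrow 2}(D)$ to a one-variable analysis of the quadratic $q(\delta,\cdot)$ on the spectral interval of the normalized Laplacian $\Lap$. Since $D = B^{-1}\, q(\delta,\Lap)\, B$, the spectra of $D$ and of $q(\delta,\Lap)$ coincide. Enumerate the eigenvalues of $\Lap$ as $0 = \mu_1 < \mu_2 \leq \cdots \leq \mu_n \leq 2$: here $\mu_n \leq 2$ is the standard upper bound recalled in the excerpt, and the strict inequality $\mu_1 < \mu_2$ uses connectedness of the market (condition (b)), which forces $\ker \Lap$ to be one-dimensional. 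The eigenvalues of $q(\delta,\Lap)$ are then the numbers $q(\delta,\mu_i)$.

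A short observation on the quadratic: $q(\delta,\lambda) = \lambda\bigl(\delta\lambda - (1+2\delta)\bigr)$ vanishes at $\lambda=0$, equals $-2$ at $\lambda=2$, and is strictly negative on $(0,2]$, because the linear factor satisfies $\delta\cdot 2 - (1+2\delta) = -1 < 0$ and thus stays negative throughout $[0,2]$. Consequently the largest eigenvalue of $q(\delta,\Lap)$ is $q(\delta,\mu_1)=0$, attained only at $\mu_1$, so
\[
\lambda_{\downarrow 2}(D) \;=\; \max_{i \geq 2} q(\delta,\mu_i).
\]
The lower bound $q(\delta,\lambda_{\uparrow 2}(\Lap)) \leq \lambda_{\downarrow 2}(D)$ is then immediate, since $q(\delta,\mu_2)$ is one of the values in this maximum.

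For the upper bound, I exploit that $q(\delta,\cdot)$ is a convex (upward-opening) parabola, so its maximum over any closed interval is attained at an endpoint. Each $\mu_i$ with $i\geq 2$ lies in $[\mu_2,\mu_n] \subseteq [\mu_2, 2]$, so convexity on $[\mu_2,2]$ yields
\[
q(\delta,\mu_i) \;\leq\; \max\{q(\delta,\mu_2),\, q(\delta,2)\} \;=\; \max\{q(\delta,\lambda_{\uparrow 2}(\Lap)),\, -2\},
\]
which is exactly the claimed upper bound.

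Given Proposition~\ref{L-expansion}, this argument is essentially spectral mapping plus a convexity observation, and carries no substantive obstacle of its own; the only subtlety is ensuring $\mu_2 > 0$ so that the eigenvalue $0$ of $q(\delta,\Lap)$ is simple and can be excluded from $\lambda_{\downarrow 2}$, which is handled by the connectedness hypothesis. The real difficulty of this portion of the paper is packaged inside Proposition~\ref{L-expansion} itself (constructing the conjugating diagonal $B$ and verifying that $\Lap$ is a genuine normalized Laplacian with spectrum in $[0,2]$), after which the present theorem is the corollary sketched above.
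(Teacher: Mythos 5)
Your proof is correct and follows essentially the same route as the paper: conjugate by $B$ via Proposition~\ref{L-expansion} so the spectrum of $D$ is the image of the spectrum of $\Lap$ under $q(\delta,\cdot)$, note the eigenvalue $0$ is simple by connectedness, and bound the remaining values $q(\delta,\mu_i)$ on $[\mu_2,2]$. The only (cosmetic) difference is that you obtain the upper bound by convexity of $q$ (maximum at the endpoints $\mu_2$ and $2$, where $q(\delta,2)=-2$), whereas the paper argues via the monotone-decreasing branch up to the vertex $1+\frac{1}{2\delta}$ together with the observation that $q(\delta,\lambda)\leq -2$ beyond it; these are equivalent elementary analyses of the same quadratic.
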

As noted above, $\Lap$ is not easy to read off from the market data $C$.
 For this reason, we follow
Theorem~\ref{precise} with
 ``comparison theorems'' in which we provide weaker bounds in terms of more easily-obtained quantities. The main tool here is a lemma in spectral graph theory which ought to be known, but which we have not been able to find previously proven.
 First, a definition:

For two $n \times n$ weighted adjacency matrices $W,\tW$, let $\nu=\nu(W,\tW)=\left(\max_{i,j} \frac{W_{ij}}{\tW_{ij}} \right) \cdot \left( \max_{i,j} \frac{\tW_{ij}}{W_{ij}} \right)$, with the ratios taken as $1$ when numerator and denominator are $0$. (Thus $\nu\geq 1$, with $\nu=1$ only if $W,\tW$ are scale copies of the same matrix.)

\begin{lemma}[Laplacian stability] \label{lap:stab}
$\lambda_{\uparrow 2} (\LAP(\tW)) \leq \nu(W,\tW) \cdot \lambda_{\uparrow 2} (\LAP(W))$. This bound is best possible for all $\nu$. 
\end{lemma}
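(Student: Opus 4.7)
The plan is to apply the variational (Rayleigh/Courant--Fischer) characterization of $\lambda_{\uparrow 2}(\LAP(\cdot))$ and show that a suitably shifted optimizer for $W$ serves as a trial vector for $\tW$ at cost only a factor of $\nu$. Writing $a = \sigma(W)$, $\tilde a = \sigma(\tW)$, and performing the change of variables $x = ay$, the Rayleigh quotient for $\LAP(W)$ becomes
\[
\lambda_{\uparrow 2}(\LAP(W)) \;=\; \min_{y} \frac{\tfrac12 \sum_{ij} W_{ij}(y_i-y_j)^2}{\sum_i a_{ii}^2\, y_i^2},
\]
where the min is over $y \neq 0$ satisfying the orthogonality constraint $\sum_i a_{ii}^2 y_i = 0$; likewise for $\tW$ with $\tilde a$ in place of $a$. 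This uses only the identity $\tfrac12\sum_{ij}A_{ij}(y_i-y_j)^2 = \bra{ay}\LAP(A)\ket{ay}$ together with the fact that $\LAP(A)$'s kernel is spanned by $a\ket{1}$.

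Let $\mu = \max_{ij} W_{ij}/\tW_{ij}$ and $\tilde\mu = \max_{ij} \tW_{ij}/W_{ij}$, so $\nu = \mu\tilde\mu$; assume $W$ and $\tW$ share support, or the bound is vacuous. Let $z$ achieve the min for $W$, and set $y_i = z_i - c$ with $c = (\sum_i \tilde a_{ii}^2 z_i)/(\sum_i \tilde a_{ii}^2)$, making $y$ admissible for the $\tW$-problem. Since $y_i - y_j = z_i - z_j$, the numerator is directly controlled: $\tfrac12\sum_{ij}\tW_{ij}(y_i-y_j)^2 \leq \tilde\mu\cdot\tfrac12\sum_{ij} W_{ij}(z_i-z_j)^2$, from the edgewise bound $\tW_{ij}\leq\tilde\mu W_{ij}$.

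The substantive (though still easy) step is the denominator, and is precisely why the shift $c$ is needed---the orthogonality constraint moves with the weights. Using $\tilde a_{ii}^2 = \sum_j \tW_{ij} \geq \sum_j W_{ij}/\mu = a_{ii}^2/\mu$ and the weighted-variance minimization identity (for $w$-weighted mean-zero $z$, $\min_{c'}\sum_i w_i(z_i-c')^2 = \sum_i w_i z_i^2$, attained at $c' = 0$), we get
\[
\sum_i \tilde a_{ii}^2\, y_i^2 \;\geq\; \frac{1}{\mu}\sum_i a_{ii}^2 (z_i - c)^2 \;\geq\; \frac{1}{\mu}\sum_i a_{ii}^2\, z_i^2.
\]
Dividing the two bounds yields $\lambda_{\uparrow 2}(\LAP(\tW)) \leq \mu\tilde\mu\cdot\lambda_{\uparrow 2}(\LAP(W)) = \nu\cdot\lambda_{\uparrow 2}(\LAP(W))$. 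The only real obstacle---the mismatch between the two orthogonality constraints---is absorbed at no cost by the shift combined with the variance-minimization identity.

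For sharpness I would use a ``two-cluster plus bridge'' family: let $W$ be two disjoint copies of $K_m$ joined by a single bridge of weight $w$, and let $\tW$ agree with $W$ except that the bridge has weight $\tilde\mu\, w$ (so $\mu = 1$ and $\nu = \tilde\mu$). As $w\to 0$ the Fiedler vectors of both $\LAP(W)$ and $\LAP(\tW)$ approach the same $\pm 1$-indicator of the two clusters; the Rayleigh quotient is dominated by the bridge term and is linear in the bridge weight, so $\lambda_{\uparrow 2}(\LAP(\tW))/\lambda_{\uparrow 2}(\LAP(W)) \to \tilde\mu = \nu$. Swapping roles (or rescaling both bridge weights) lets this limit be driven arbitrarily close to any prescribed $\nu \geq 1$.
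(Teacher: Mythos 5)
Your proof of the inequality is correct and is essentially the paper's own argument: the same Rayleigh-quotient characterization after the change of variables $x=ay$, the same trick of shifting the $W$-optimizer by a constant to satisfy the $\tW$-orthogonality constraint, the same edgewise bound on the numerator and degreewise bound on the denominator (you track the two factors $\mu,\tilde\mu$ separately where the paper rescales $\tW$ so each step costs $\nu^{1/2}$, a cosmetic difference), and the same weighted-variance-minimization observation showing the shift cannot decrease the denominator. The only divergence is the tightness example: the paper uses a three-edge chain with the explicit formula $\lambda_{\uparrow 2}(\LAP(W))=\frac{1}{1+x}$, giving the ratio $\nu\frac{1+x}{\nu+x}\to\nu$ exactly, whereas your two-cliques-joined-by-a-bridge family is argued perturbatively as the bridge weight tends to $0$; that also works, though as written it is a sketch rather than a computation.
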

(In this lemma $W$ may be any weighted adjacency matrix, not necessarily our market graph, although that is how we apply the lemma.)
\begin{proof}
We may assume that $W$ is connected, otherwise apply the lemma separately to each connected component.

Note that there is always a $c>0$ s.t.\ $W_{ij}\leq \nu^{1/2} c\tW_{ij} \leq \nu W_{ij}$ for all $i,j$ (and this serves as an alternative definition of $\nu$). Recalling that $\LAP$ is invariant under rescaling of its argument, we may assume that $\tW$ has been scaled so that $W_{ij}\leq \nu^{1/2} \tW_{ij} \leq \nu W_{ij}$ for all $i,j$.

Let $w=\sigma(W)$ and $\tw=\sigma(\tW)$. Let $L=\LAP(W)$ and $\tL=\LAP(\tW)$.

It is well known that $\ker{L}=\Span \ket{w}$. Likewise $\ker{\tL}=\Span \ket{\tw}$. By the spectral theorem,
\[ \lambda_{\uparrow 2}(L)=\inf_{x: \bra{x}\ket{w}=0} \frac{\bra{x}L\ket{x}}{\bra{x}\ket{x}} \]
and applying the transformation $\ket{b}=w^{-1}\ket{x}$ we have
\be \lambda_{\uparrow 2}(L)=\inf_{b: \bra{b}\ket{w^2}=0} \frac{\bra{b}wLw\ket{b}}{\bra{b}w^2\ket{b}} \label{lamL} \ee
Note that
\[ \frac{\bra{b}wLw\ket{b}}{\bra{b}w^2\ket{b}}=
\frac{\sum_{i<j}W_{ij} (b_i-b_j)^2}{\sum_i w_{i}^2 b_i^2} =: R_W(b) \]
$R_W(b)$ is known as the Raleigh quotient of $b$ in $W$.

Let $b$ be a vector achieving Eq.~\eqref{lamL}, that is to say, a second eigenvector of $L$. So $\bra{b}\ket{w^2}=0$ and
 $\lambda_{\uparrow 2}(L)=\frac{\bra{b}wLw\ket{b}}{\bra{b}w^2\ket{b}}$.

  We use $b$ to produce a proxy $\hb$ for a second eigenvector of $\tL$:
\[ \ket{\hb}=\ket{b}-\ket{1} \frac{\bra{\tw^2}\ket{b}}{\bra{\tw}\ket{1}} \]
This satisfies the required
\[ \bra{\tw^2}\ket{\hb}=0. \]
So
\[ \lambda_{\uparrow 2}(\tL) \leq  R_{\tW}(\hb) =
\frac{\sum_{i<j}\tW_{ij} (\hb_i-\hb_j)^2}{\sum_i \tw_{i}^2 \hb_i^2}
=
\frac{\sum_{i<j}\tW_{ij} (b_i-b_j)^2}{\sum_i \tw_{i}^2 \hb_i^2} \]
Upper bounding the entries of $\tW$, we have
\[ \ldots
\leq
\nu^{1/2}
\frac{\sum_{i<j}W_{ij} (b_i-b_j)^2}{\sum_i \tw_{i}^2 \hb_i^2} \]
and lower bounding the entries of $\tw$, we have
\bean
\ldots &\leq &
\nu
\frac{\sum_{i<j}W_{ij} (b_i-b_j)^2}{\sum_i w_{i}^2 \hb_i^2}
\\ &=&
\nu
\frac{\sum_{i<j}W_{ij} (b_i-b_j)^2}{\sum_i w_{i}^2 b_i^2}
\frac{\sum_i w_{i}^2 b_i^2}{\sum_i w_{i}^2 \hb_i^2}
\\ &=& \nu R_W(b)
\frac{\sum_i w_{i}^2 b_i^2}{\sum_i w_{i}^2 \hb_i^2}
\\ &=& \nu \lambda_{\uparrow 2}(L)
\frac{\sum_i w_{i}^2 b_i^2}{\sum_i w_{i}^2 \hb_i^2}
\eean
We need to lower bound the last denominator. Recall that there is a $t$ s.t.\ $\hb_i=b_i-t$. Let $f(t)=\sum_i w_{i}^2 (b_i-t)^2$. Then $f$ is a quadratic in $t$ with positive leading coefficient, and $\partial f / \partial t= 2t\sum_i w_{i}^2-2\sum_iw_{i}^2b_i=2t\sum_i w_{i}^2$; so $f$ achieves its global minimum at $t=0$. Consequently,
$\frac{\sum_i w_{i}^2 b_i^2}{\sum_i w_{i}^2 \hb_i^2} \leq 1$
and therefore
\[ \lambda_{\uparrow 2}(\tL) \leq \nu \lambda_{\uparrow 2}(L) \]
proving the bound in the Lemma.

Turning to optimality of the Lemma: an example achieving this must focus the ``$w$'' weight away from the ``$b$'' weight, so that large jumps in $b$ occur only across weakly-weighted edges. This is achieved in the example of a chain $W$ of three edges in which the middle edge has weight $1$ and the outside edges weight $x$. One may calculate that $\lambda_{\uparrow 2} (\LAP(W))=\frac{1}{1+x}$. Now consider $\tW$ in which the outside edges have weight $x/\nu$. Then $\lambda_{\uparrow 2} (\LAP(\tW))/\lambda_{\uparrow 2} (\LAP(W)) =\nu \frac{1+x}{\nu+x}$. Fixing any $\nu$ and considering the limit of large $x$ we see that the supremum of this ratio is $\nu$.
\end{proof}
In order to apply the lemma we need to bound $\nu$; this will depend on
two basic parameters. These are $\gamma$, a kind of  measure of the complexity of the utility functions; and
$\tpsi$, which measures the disparity in the equilibrium prices across the network.
The first part of the comparison theorem assumes knowledge only of the underlying network (encapsulated in $U$), and of the numbers $\gamma$ and $\tpsi$.
The second part of the comparison theorem assumes that besides the  network and $\gamma$, we know also the equilibrium prices. This is reasonable if one is studying a functioning market near equilibrium. Let $\RG$ denote the weighted adjacency matrix with weights $\RG_{ij}=\sqrt{r_i r_j}$ on edges $i \sim j$ of the network (and $0$ elsewhere).
\begin{theorem}[Market comparison bounds] \label{MarketVsUnwghtd} \mbox{} \\
\begin{enumerate} \item \label{PartUnw} Bound through the unweighted Laplacian:
\bean
q(\delta, \min\{ \tpsi \gamma^{2+\delta} \lambda_{\uparrow 2}(\LAP(U)),1+\frac{1}{2\delta},1+\frac{1}{n-1}\})  &
\leq &
 \lambda_{\downarrow 2}(D) \\
& \leq &
\max\{q(\delta, \lambda_{\uparrow 2}(\LAP(U))/(\tpsi \gamma^{2+\delta})),-2\}. \eean
\item \label{PartEq} Bound through the equilibrium prices Laplacian:
\bean q(\delta, \min\{\gamma^{1+\delta} \lambda_{\uparrow 2}(\LAP(\RG)),1+\frac{1}{2\delta},1+\frac{1}{n-1}\})
& \leq &
 \lambda_{\downarrow 2}(D) \\
& \leq &
\max\{q(\delta, \lambda_{\uparrow 2}(\LAP(\RG))/\gamma^{1+\delta}),-2\}. \eean
\end{enumerate}
\end{theorem}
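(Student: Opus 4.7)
The plan is to compose Theorem~\ref{precise} with Lemma~\ref{lap:stab}. Theorem~\ref{precise} already reduces the question to bounds on $\lambda_{\uparrow 2}(\Lap)$, so both inequalities in each part of the theorem follow once one sandwiches $\lambda_{\uparrow 2}(\Lap)$ between a multiple of $\lambda_{\uparrow 2}(\LAP(U))$ (Part~\ref{PartUnw}) or a multiple of $\lambda_{\uparrow 2}(\LAP(\RG))$ (Part~\ref{PartEq}). Because $\nu(\cdot,\cdot)$ from Lemma~\ref{lap:stab} is symmetric in its two arguments, a single estimate $\nu(W,\cdot)$ -- where $W$ denotes the weighted adjacency matrix underlying $\Lap$ -- delivers both the upper and the lower multiplicative sandwich simultaneously.

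Before bringing Lemma~\ref{lap:stab} in, I would record three simple ingredients that govern how $q$ propagates bounds. First, $q(\delta,\lambda)=-(1+2\delta)\lambda+\delta\lambda^2$ is strictly decreasing on $[0,1+\tfrac{1}{2\delta}]$ and attains its minimum on $[0,2]$ at $\lambda=1+\tfrac{1}{2\delta}$. Second, $q(\delta,2)=-2$, so on the increasing branch $[1+\tfrac{1}{2\delta},2]$ the quadratic $q$ is $\leq -2$; this is precisely what the $\max\{\cdot,-2\}$ in Theorem~\ref{precise} exploits. Third, any normalized Laplacian on $n$ vertices satisfies $\lambda_{\uparrow 2}(\LAP)\leq n/(n-1)=1+\tfrac{1}{n-1}$, since $\lambda_{\uparrow 1}=0$ and $\Tr\LAP\leq n$. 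Together these make the inner minimum $\min\{\cdot,1+\tfrac{1}{2\delta},1+\tfrac{1}{n-1}\}$ transparently valid: truncating an upper bound on $\lambda_{\uparrow 2}(\Lap)$ at $1+\tfrac{1}{2\delta}$ keeps us in the decreasing branch of $q$, and truncating at $1+\tfrac{1}{n-1}$ is always admissible because $\lambda_{\uparrow 2}(\Lap)$ never exceeds it.

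The substantive step is estimating $\nu$. From the definition of $\Lap$ (deferred to Section~\ref{sec:dyn}) and the equilibrium demands $d_{ij}(r)=r_iC_{ij}/(r_j^{1+\delta}R_i)$, each entry $W_{ij}$ is a symmetric combination of the two equilibrium demands on the edge, hence carries a factor $\sqrt{r_ir_j}$ from the budget terms together with symmetric functions of $C_{ij},C_{ji}$ and of the normalizers $R_i,R_j$. For Part~\ref{PartEq}, dividing $W_{ij}$ by $\RG_{ij}=\sqrt{r_ir_j}$ cancels the budget factor entirely, and the residual ratio is controlled by the utility-complexity parameter $\gamma$, giving $\nu(W,\RG)\leq \gamma^{1+\delta}$ (the exponent $1+\delta$ reflecting the $\delta$ in $C_{ij}=c_{ij}^\delta$ together with a surviving price-independent factor from $R_i$). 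For Part~\ref{PartUnw}, the $0/1$-matrix $U$ cannot absorb $\sqrt{r_ir_j}$, so the price disparity across adjacent vertices contributes the factor $\tpsi$; $U$ likewise cannot absorb the $C_{ij}^\delta$ asymmetry, which costs one additional power of $\gamma$, giving the total estimate $\nu(W,U)\leq \tpsi\gamma^{2+\delta}$.

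The main obstacle is not conceptual but mechanical: keeping the $\delta$-exponents straight so that $\gamma$ emerges with the correct powers $1+\delta$ and $2+\delta$ and the factor $\tpsi$ enters exactly once. Once $W$ is written out explicitly from Section~\ref{sec:dyn}, the entrywise ratio bound is a direct computation that uses only the normalization $\min_{j:C_{ij}>0}C_{ij}=1$, the identity $C_{ij}=c_{ij}^\delta$, and the circulation-free condition (which ensures that $C_{ij}C_{ji}$ behaves symmetrically along cycles, preventing $\nu$ from accumulating global multiplicative error around loops of the market graph).
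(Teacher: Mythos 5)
Your route is the paper's route: compose Theorem~\ref{precise} with Lemma~\ref{lap:stab} (used in both directions via the symmetry of $\nu$), truncate at $1+\frac{1}{2\delta}$ to stay on the decreasing branch of $q$ and at $1+\frac{1}{n-1}$ via the trace bound, and then reduce everything to the two entrywise estimates $\nu(W,U)\leq\tpsi\gamma^{2+\delta}$ and $\nu(W,\RG)\leq\gamma^{1+\delta}$ -- which are exactly the paper's intermediate lemma. The handling of $q$'s non-monotonicity and of the $\max\{\cdot,-2\}$ clause is also as in the paper's Lemma~\ref{any-nu}.

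The gap is in the step you call mechanical. The toolkit you list for bounding $\nu$ (the normalization $\min_{j:C_{ij}>0}C_{ij}=1$, the identity $C_{ij}=c_{ij}^\delta$, and circulation-freeness) does not suffice, because the entries of $W=B\ell B$ involve the unknown equilibrium prices through $R_i=\sum_k C_{ik}/r_k^\delta$ and through $B$ (which contains $\psi$). What is actually needed is: (i) the telescoped detailed-balance identity (Eq.~\eqref{psi-tel}), which simplifies $B_{kk}$ to $\sqrt{r_k r_{i_0}^{1+\delta}/R_{i_0}}$, so that $W_{ij}$ equals an edge-independent constant times $\sqrt{r_ir_j}\,\ell_{ij}$ -- this, not a generic ``symmetric combination of demands,'' is why the $\sqrt{r_ir_j}$ factor cancels cleanly against $\RG$; (ii) Lemma~\ref{rratio}, that adjacent equilibrium prices differ by at most a factor $\gamma$, which is an argument from the equilibrium equations and is what yields $\gamma^{-1-\delta}\leq\ell_{ij}\leq 1$ on edges (Lemma~\ref{lbd}) -- the exponent $1+\delta$ comes from the adjacent price ratios entering $R_i,R_j$ with power $\delta$ plus one factor from $\sum_k C_{ik}\leq\gamma$, not from the identity $C_{ij}=c_{ij}^\delta$; and (iii) the global price-range bound $\tilde r\leq\gamma\tpsi$ from Section~\ref{seq:eq}, which is where both the $\tpsi$ and the extra power of $\gamma$ in $\nu(W,U)\leq\tpsi\gamma^{2+\delta}$ originate, rather than any ``asymmetry of $C_{ij}^\delta$.'' Circulation-freeness enters only upstream, through the existence of the detailed-balance equilibrium and of $\psi$; it plays no role in the ratio estimate itself. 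With these three ingredients supplied, your computation closes and the remainder of your argument coincides with the paper's proof.
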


The complicated form of these bounds is misleading; in many and perhaps most cases of interest, the $\min$ or $\max$ is achieved by the term containing $\lambda_{\uparrow 2}$, hence the slack is fully captured by the factor of
$\tpsi \gamma^{2+\delta}$ or
$\gamma^{1+\delta}$
respectively. Exceptions are briefly discussed after Lemma~\ref{any-nu}.

Finally, we remind that the damping rate of a market is important not for what it says about (nonexistent) markets that are in true isolation; but rather for what it says about the predictive power of the equilibrium model, to a market that is buffeted by external noise. This is taken up in Section~\ref{noise}, where we show that in a certain noise model, and for any family of markets in which the prices are within a bounded range:
\begin{theorem}[Steady state distribution] \label{steady}
In steady state the prices are distributed according to a multivariate normal distribution, with largest directional variance proportional to $-1/\lambda_{\downarrow 2}(D)$. \end{theorem}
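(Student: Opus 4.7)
The plan is to recognize that, linearized around equilibrium, the noise-perturbed price dynamics form a multivariate Ornstein--Uhlenbeck (OU) process, and to read off its stationary covariance via the spectral decomposition supplied by Proposition~\ref{L-expansion}.

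First I would write the noisy dynamics as a linear SDE in the $\bar\alpha$ variables. The deterministic part is $\ket{\dot{\alpha}}=D\ket{\alpha}$ from Eq.~\eqref{alphaD}, and Section~\ref{noise}'s noise model contributes an additive Brownian forcing term, giving $d\bar\alpha = D\bar\alpha\,dt + \Sigma\, dW_t$ on the $(n-1)$-dimensional subspace transverse to the scale-invariant direction $\ket{1}$ (the kernel of $D$ per Eq.~\eqref{alpha-ker}). The dynamics are well-defined there because on this subspace $D$ has all strictly negative eigenvalues by Proposition~\ref{L-expansion}: they are the values $q(\delta,\mu)$ for the nonzero eigenvalues $\mu$ of $\Lap$.

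Next I would use Proposition~\ref{L-expansion} to change coordinates. Writing $D=B^{-1} q(\delta,\Lap) B$ with $\Lap$ symmetric, the matrix $BDB^{-1}=q(\delta,\Lap)$ is symmetric, so in the $B$-basis the process is a symmetric OU process. Diagonalize $q(\delta,\Lap)=P\Lambda P^\top$ with $\Lambda=\diag(\lambda_1,\ldots,\lambda_n)$, $\lambda_1=0$ corresponding to $\ket{1}$ and $\lambda_i<0$ otherwise, ordered so that $\lambda_2=\lambda_{\downarrow 2}(D)$ is the smallest-magnitude nonzero eigenvalue. In the $P^\top B\bar\alpha$ coordinates the SDE decouples into independent one-dimensional OU processes $d\xi_i=\lambda_i\xi_i\,dt + \tau_i\,dW_i$ (with $\tau_i^2$ the diagonal entries of $P^\top B\Sigma\Sigma^\top B^\top P$, which we may assume approximately rotationally symmetric in the noise model of Section~\ref{noise}, so all $\tau_i$ share a common value $\tau$). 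The stationary distribution of each coordinate is Gaussian $\mathcal{N}(0, -\tau^2/(2\lambda_i))$, so the joint law is multivariate normal, establishing the first conclusion.

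Finally I would extract the largest directional variance. Transforming the diagonal stationary covariance $\diag(-\tau^2/(2\lambda_i))$ back through $B^{-1}P$, the eigenvalues of the stationary covariance of $\bar\alpha$ are $-\tau^2/(2\lambda_i)$ (possibly reweighted by $B$-factors, which are bounded because we are in a family with prices in a bounded range, hence $B$'s condition number is $O(1)$). The largest eigenvalue of the covariance, i.e.\ the largest directional variance, is therefore $-\tau^2/(2\lambda_{\downarrow 2}(D))$ up to bounded factors, which is the claimed proportionality to $-1/\lambda_{\downarrow 2}(D)$.

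The main obstacle is a careful handling of the null direction: one must verify that the noise model in Section~\ref{noise} acts only on the quotient by $\Span\ket{1}$ (or equivalently, that the component along $\ket{1}$ is fixed by a gauge choice such as a numeraire), so that the SDE is genuinely stable and admits a stationary distribution. A secondary but routine point is to check that the bounded-price hypothesis keeps the entries of $B$ (and the linearization error) controlled, so that the OU analysis of the linearized system indeed describes the steady state of the nonlinear dynamics to leading order.
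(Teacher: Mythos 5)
Your proposal is correct and follows essentially the same route as the paper: the paper also treats the noisy dynamics as an Ornstein--Uhlenbeck process, symmetrized in the $\beta=B\bar{\alpha}$ coordinates via Proposition~\ref{L-expansion}, diagonalized along the eigenvectors of $\Lap$, with the per-mode Fokker--Planck equation giving a Gaussian stationary variance $-\kappa^2/2\lambda$ and the slowest nonzero mode supplying the largest directional variance, the bounded-price hypothesis serving exactly to control the change of basis $B$. The only cosmetic difference is that the paper postulates the isotropic noise directly in the $\beta$ basis, which is precisely the isotropy assumption you impose after transforming.
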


\section{Equilibrium: existence, uniqueness and detailed balance} \label{seq:eq}

The equilibrium equations Eq.~\eqref{demand-ruleG2} are homogeneous of degree $1$ in $r$, so any scalar multiple of an equilibrium vector $r$ is also an equilibrium vector. Subsequently when we discuss uniqueness, ``up to scaling'' is implied. Due to connectedness of the market, no price can be $0$ at an equilibrium. The existence of an equilibrium in our setting is a corollary of the theorem of Arrow and Debreu~\cite{ArrowD54} and McKenzie~\cite{McKenzie54} (improving on an earlier argument of Wald, see~\cite{Hildenbrand98,DW15}). In fact, as the utility functions are strongly concave and twice continuously differentiable, the equilibrium is unique (up to price scaling). We, however, require special properties of the equilibrium, and so need an existence proof which establishes these properties. In the course of providing this we incidentally give a self-contained proof of existence and uniqueness.

Adopting a term from probability theory, we say that a market is in detailed balance at prices $r$ if for every $i,j$, the payments from $i$ to $j$ equal those from $j$ to $i$. The payment from $i$ to $j$ is $d_{ij} r_j$ so the detailed balance conditions are
$
 \frac{ r_i C_{ij}}{r_j^{\delta}R_i}
=  \frac{ r_j C_{ji}}{r_i^{\delta}R_j}
$, or
\be
 \frac{r_i^{1+\delta} C_{ij}}{R_i}
=  \frac{ r_j^{1+\delta} C_{ji}}{R_j} \label{r-soln}
\ee

\begin{theorem} \label{exun} A market $C$ has unique equilibrium prices $r$,
and this market is in detailed balance at prices $r$. \end{theorem}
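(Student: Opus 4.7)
The plan is to exploit the circulation-free condition (c) by first constructing detailed-balance weights on the market graph and then solving for prices via a contraction whose unique fixed point automatically realizes detailed balance. \textbf{Step 1 (detailed-balance weights).} I begin by producing positive scalars $\phi_i$ (unique up to common rescaling) satisfying $\phi_i C_{ij} = \phi_j C_{ji}$ whenever $C_{ij}>0$. In log form this is the discrete potential equation $\log\phi_i-\log\phi_j=\log(C_{ji}/C_{ij})$, whose only obstruction is that the right-hand side sum around every cycle vanishes --- and this vanishing is exactly condition (c), since $C_{ij}=c_{ij}^\delta$ preserves the cycle identity. Connectedness (b) then pins $\phi$ down up to one overall scale.

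\textbf{Step 2 (contractive fixed point for prices).} With $\phi$ fixed, the detailed balance equation \eqref{r-soln} is equivalent to
\[ r_i^{1+\delta} = \phi_i R_i(r) = \phi_i \sum_k C_{ik}\, r_k^{-\delta} \qquad \text{for all } i, \]
so I look for a fixed point of $T(r)_i := (\phi_i R_i(r))^{1/(1+\delta)}$. Passing to log-coordinates $\rho=\log r$, the map $\log R_i(\rho) = \log\sum_k C_{ik} e^{-\delta\rho_k}$ is a log-sum-exp of a linear form; writing the difference as $\log\sum_k w_{i,k}\, e^{-\delta(\rho'_k-\rho_k)}$ for the probability weights $w_{i,k}=C_{ik}e^{-\delta\rho_k}/R_i(\rho)$ gives the one-line bound $|\log R_i(\rho')-\log R_i(\rho)|\le \delta\,\|\rho'-\rho\|_\infty$. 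After the prefactor $1/(1+\delta)$, this shows $T$ is a contraction on $\rset^n$ with respect to the complete metric $\|\log r-\log r'\|_\infty$ on the positive orthant, with Lipschitz constant $\delta/(1+\delta)<1$, so Banach's theorem yields a unique positive $r$.

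\textbf{Step 3 (equilibrium and uniqueness).} By construction the pair $(\phi,r)$ satisfies \eqref{r-soln}, which is exactly detailed balance of payments $r_j\, d_{ij}(r)=r_i\, d_{ji}(r)$. Summing on $i$, the right side is agent $j$'s total expenditure, which equals $r_j$ by the CES budget identity; hence $r_j\, d_j(r)=r_j$, giving $d_j(r)=1$ for every $j$, so $r$ is an equilibrium. For uniqueness modulo scaling, rescaling $\phi\mapsto c\phi$ rescales the fixed point by $c^{1/(1+2\delta)}$, which traces out exactly the scale ray of detailed-balance equilibria; that no other equilibrium ray exists follows from strict monotonicity of excess demand in the gross-substitutes regime, a classical Arrow--Block--Hurwicz style argument.

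\textbf{Main obstacle.} Steps 1--2 are essentially forced by the algebraic structure. The delicate point is closing the uniqueness half in a self-contained way: to show that every equilibrium, not just the one produced by the contraction, gives rise to detailed-balance weights, one must promote the (scalar) equilibrium identity $\sum_i \phi_i^* C_{ij}/(r_i^*)^\delta = (r_j^*)^{1+\delta}$ (where $\phi_i^* = (r_i^*)^{1+\delta}/R_i(r^*)$) to the (pairwise) detailed-balance identity $\phi_i^* C_{ij}=\phi_j^* C_{ji}$. This is where the circulation condition must do genuine work beyond merely guaranteeing the existence of $\phi$, and it is the step I would scrutinize most carefully before concluding.
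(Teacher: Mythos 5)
Your construction is essentially sound, and the existence half takes a genuinely different route from the paper. Your Step 1 is exactly the paper's potential construction (the $\psi_j$ of Lemma~\ref{det-balG}, satisfying Eq.~\eqref{psi1}), and your Step 3 is the same computation the paper does to show that detailed balance implies market clearing. Where you differ is the solution of the price system: the paper rewrites $r_j^{1+\delta}=\psi_j \Pp_j(r)$ (up to normalization) as a fixed point of a map on the compact convex set $K=\{p:|p|=1,\ \tilde p\le\gamma\tpsi\}$ and invokes Brouwer, which requires the a priori bound $\widetilde{f^0(p)}\le\gamma\tpsi$; you instead observe that in log-coordinates the map $\rho\mapsto\frac{1}{1+\delta}(\log\phi+\log R(\rho))$ is a $\frac{\delta}{1+\delta}$-contraction in the sup norm (the gradient of a log-sum-exp of $-\delta\rho$ has $\ell_1$-norm $\delta$), so Banach gives existence \emph{and} uniqueness of the detailed-balance solution for each normalization of $\phi$, constructively and without any compactness bookkeeping. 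Your scaling computation ($\phi\mapsto c\phi$ rescales $r$ by $c^{1/(1+2\delta)}$) is also correct. This is a cleaner existence argument than the paper's.

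The one real gap is the uniqueness half, and your own diagnosis of the obstacle is off-target. You outsource uniqueness of the equilibrium to ``a classical Arrow--Block--Hurwicz style argument,'' which is not a proof, and you then worry about promoting the scalar equilibrium identity to the pairwise detailed-balance identity for an \emph{arbitrary} equilibrium. Neither is needed. The paper closes the loop with a short, self-contained maximum-ratio argument (Lemma~\ref{un}) that never mentions detailed balance: given two equilibria $r,r'$, pick $j$ minimizing $r'_j/r_j$ with a neighbor not attaining the minimum, rescale so $r'_j=r_j$ and $r'\ge r$ with strict inequality at some neighbor of $j$, and use that $r_i/R_i$ is nondecreasing in all prices and strictly increasing in $r_i$ and in the prices of $i$'s neighbors to contradict Eq.~\eqref{demand-ruleG2} at $j$. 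Once uniqueness (up to scaling) is established this way, your Steps 1--3 already finish the theorem: they exhibit an equilibrium in detailed balance, and uniqueness forces \emph{the} equilibrium to be it, so the scalar-to-pairwise promotion you flag never has to be carried out. Add such a direct uniqueness argument (or reproduce the gross-substitutes uniqueness proof rather than citing it) and your proof is complete; as written, the uniqueness clause of the theorem rests on an external reference, which defeats the stated purpose of giving a self-contained argument.
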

We remind that this is in the regime $\delta>0$, among the other assumptions detailed in Section~\ref{model}.
\begin{proof}
We begin with uniqueness.
\begin{lemma} There can be at most one equilibrium vector. \label{un} \end{lemma}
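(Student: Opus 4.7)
The plan is to use the classical gross-substitutes uniqueness argument: since $\delta>0$ we are squarely in the gross-substitutes regime, and two distinct (up to scaling) equilibria will be shown incompatible with connectedness of the market graph. Suppose for contradiction the market admits equilibria $r$ and $r'$ with $r'$ not a positive scalar multiple of $r$. Using the scale-invariance of the equilibrium equations noted in Section~\ref{seq:eq}, I rescale $r'$ so that $r'_i \le r_i$ for every $i$ with equality for at least one index; the set $J:=\{i : r'_i = r_i\}$ is then a proper nonempty subset of $[n]$.

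The first technical step is a one-line verification of gross substitutes directly from Eq.~\eqref{demandG}. Differentiating $d_{ij}=p_i C_{ij}/(p_j^{1+\delta}\Pp_i(p))$ with respect to $p_k$ for $k\ne j$: when $i\ne k$ one gets $+\delta p_i C_{ij}C_{ik}/(p_j^{1+\delta}p_k^{\delta+1}\Pp_i^2)\ge 0$, and when $i=k$ one gets an analogous sum of two nonnegative terms that is strictly positive whenever $C_{kj}>0$. Summing over $i$, the cross-partial $\partial d_j/\partial p_k$ is always nonnegative, and strictly positive whenever $j$ and $k$ are adjacent in the market graph.

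Next I move from $r$ to $r'$ along the segment $p(t)=(1-t)r+tr'$, $t\in[0,1]$. By the connectedness assumption (b) of Section~\ref{model}, there exists an edge of the market graph with one endpoint $j_0\in J$ and the other endpoint $k_0\in[n]\setminus J$. Along this path, $p_{k_0}(t)$ strictly decreases while every other coordinate weakly decreases, so combining the sign information from the previous paragraph gives
\[
\tfrac{d}{dt}\, d_{j_0}(p(t)) \;\le\; \bigl(\partial d_{j_0}/\partial p_{k_0}\bigr)\cdot \dot{p}_{k_0}(t) \;<\;0
\]
pointwise in $t$. Integrating yields $d_{j_0}(r')<d_{j_0}(r)=1$, contradicting the assumption that $r'$ is an equilibrium.

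I expect the main obstacle to be routing the strict inequality: the bare gross-substitutes bound $\partial d_j/\partial p_k\ge 0$ only produces $d_{j_0}(r')\le 1$, and extracting the strict decrease required for a contradiction is exactly where the connectedness hypothesis (b) enters, by guaranteeing an edge across the cut $(J,[n]\setminus J)$. Everything else is a direct computation from the explicit CES formula, and the $\delta>0$ assumption ensures the relevant cross-partials are strictly positive over any existing edge.
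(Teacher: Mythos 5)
Your argument is correct, but it runs along a different track than the paper's. The paper's proof is a purely algebraic comparison: it normalizes the two equilibria at a vertex $j$ that minimizes $r'_j/r_j$ (so $r'_j=r_j$, $r'\geq r$, with strict inequality at some neighbor of $j$), observes that $r_i/R_i$ is nondecreasing in the price vector and strictly increasing in $r_i$ and in the prices of $i$'s neighbors, and then plugs both vectors into the single equilibrium equation \eqref{demand-ruleG2} at $j$ to get the contradiction $1>1$ in one line, with no calculus. You instead run the classical gross-substitutes uniqueness argument: you verify weak gross substitutability of aggregate demand directly from \eqref{demandG}, with strict positivity of $\partial d_j/\partial p_k$ across market edges, and integrate along the segment joining the two (suitably rescaled) equilibria to force $d_{j_0}(r')<1$ at a vertex $j_0$ adjacent to the cut. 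Both proofs hinge on the same two structural facts (connectedness supplies an edge across the cut between the extremal-ratio coordinates and the rest, and $\delta>0$ supplies strict monotonicity across that edge), so the difference is mechanical rather than conceptual; your version is longer but makes the gross-substitutes mechanism explicit and would transfer verbatim to any demand system with weak GS plus strict GS on edges, while the paper's exploits the specific CES form of $r_i/R_i$ to avoid differentiation altogether. One presentational gap you should close: in the chain-rule bound $\frac{d}{dt}d_{j_0}(p(t))\leq\bigl(\partial d_{j_0}/\partial p_{k_0}\bigr)\dot{p}_{k_0}(t)$, the own-price term $k=j_0$ is not covered by your cross-partial sign analysis (it is in fact negative); it drops out only because $j_0\in J$ forces $\dot{p}_{j_0}=0$, and this should be said explicitly since it is precisely the reason for anchoring the argument at a coordinate where the two equilibria agree.
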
 \begin{proof}
Suppose there are two vectors $r,r'$ solving the equilibrium equations Eq.~\eqref{demand-ruleG2},
with $j$ a vertex minimizing $r'_j/r_j$ and having a neighbor $i$ (that is, an $i$ s.t.\ $C_{ij}>0$) which does not minimize this ratio. Rescale $r'$ so $r'_j=r_j$, $r'_i\geq r_i$ for all $i$, and $r'_i > r_i$ for some neighbor $i$ of $j$. Observe that since $\delta>0$, the quantity $r_i/R_i$ is a nondecreasing function of the price vector $r$, and moreover strictly increasing in $r_i$ and in any $r_k$ for $k$ a neighbor of $i$.
 Then applying Eq.~\eqref{demand-ruleG2} in numerator and denominator:
\[ 1= \frac{r'^{1+\delta}_j}{r_j^{1+\delta}}=
\frac{\sum_{i } \frac{ r'_i C_{ij}}{R'_i}}
{\sum_{i } \frac{ r_i C_{ij}}{R_i} }
> \frac{\sum_{i } \frac{r_i C_{ij}}{R_i}}
{\sum_{i } \frac{r_i C_{ij}}{R_i} }
=1
\]
a contradiction. \end{proof}

\begin{lemma} There exist prices $r$ satisfying the detailed balance conditions Eq.~\eqref{r-soln}.
\label{det-balG} \end{lemma}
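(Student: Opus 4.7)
The plan is to rewrite Eq.~\eqref{r-soln} in a form that separates its combinatorial and analytic content, and to handle each piece in turn. Set $\psi_i := r_i^{1+\delta}/R_i$; then Eq.~\eqref{r-soln} is equivalent to the pair of conditions (I) $\psi_i C_{ij} = \psi_j C_{ji}$ for every edge $i \sim j$, and (II) $r_i^{1+\delta} = \psi_i R_i(r)$ for every $i$. The plan is to first construct the weights $\psi$ using only the market data, and then to solve (II) for $r$ by a fixed-point argument.

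Step (I) reduces to extending the edge-prescribed ratios $\psi_i/\psi_j = C_{ji}/C_{ij}$ to a function on vertices. Fixing any reference vertex $v_0$ with $\psi_{v_0} = 1$, one assigns $\psi_u$ to any other vertex $u$ by taking the product of the ratios $C_{v_\ell v_{\ell-1}}/C_{v_{\ell-1} v_\ell}$ along a path $v_0, v_1, \ldots, v_k = u$. Well-definedness amounts to $\prod_\ell C_{i_\ell i_{\ell-1}}/C_{i_{\ell-1} i_\ell} = 1$ around every cycle, which is precisely condition (c) applied to $C = c^\delta$ (the circulation-free property survives raising to the $\delta$-th power), and connectedness (b) ensures $\psi$ is defined at every vertex. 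Thus $\psi$ exists and is determined up to an overall positive scalar.

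For step (II), pass to logarithmic coordinates $u_i := \log r_i$, turning $r_i^{1+\delta} = \psi_i R_i(r)$ into the fixed-point problem $u = T(u)$ on $\rset^n$, where
\[
T_i(u) \;=\; \frac{1}{1+\delta}\log \psi_i \;+\; \frac{1}{1+\delta}\log\sum_k C_{ik}\, e^{-\delta u_k}.
\]
Direct differentiation gives $\partial T_i/\partial u_j = -\tfrac{\delta}{1+\delta}\, p_{ij}(u)$, where $p_{ij}(u) = C_{ij} e^{-\delta u_j}/\sum_k C_{ik} e^{-\delta u_k}$ is, for each fixed $i$, a probability distribution in $j$. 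Consequently each row-sum of the absolute Jacobian equals $\delta/(1+\delta) < 1$, so by the mean-value inequality $\|T(u)-T(u')\|_\infty \leq \tfrac{\delta}{1+\delta}\|u-u'\|_\infty$ and the Banach fixed-point theorem produces a unique $u^* \in \rset^n$. Setting $r_i := e^{u^*_i}$ gives prices satisfying (II) and therefore, via (I), the detailed balance equations Eq.~\eqref{r-soln}.

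I expect the main obstacle to be step (I), since that is where the delicate circulation-free hypothesis is actually used; step (II) is essentially automatic because $\delta/(1+\delta) < 1$ throughout the regime $\delta>0$, making the contraction free. The apparent tension between the Banach uniqueness of $u^*$ and the uniqueness-up-to-scaling already proved in Lemma~\ref{un} is resolved by noting that $\psi$ is only determined up to a positive scalar $c$, and rescaling $\psi \mapsto c\psi$ shifts $u^*$ by a uniform constant (an elementary calculation using the homogeneity of the inner log-sum-exp), which corresponds precisely to rescaling $r$.
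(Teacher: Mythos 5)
Your proof is correct, and its second half takes a genuinely different route from the paper. The combinatorial step is identical: the paper also defines $\psi$ by path products of the ratios $C_{i_{\ell-1}i_\ell}/C_{i_\ell i_{\ell-1}}$, with well-definedness coming from the circulation-free condition (c) and totality from connectedness (b), so that $\psi_i C_{ij}=\psi_j C_{ji}$ on edges. Where you diverge is in solving $r_i^{1+\delta}=\psi_i R_i(r)$: the paper stays in price coordinates, builds the compact convex set $K=\{p:|p|=1,\ \tilde{p}\le\gamma\tpsi\}$, shows the normalized map $f_j(p)\propto(\psi_j \Pp_j(p))^{1/(1+\delta)}$ preserves $K$ (this is where the $\gamma$ and $\tpsi$ estimates enter), and invokes Brouwer; you instead pass to $u=\log r$ and observe that $T$ is a $\|\cdot\|_\infty$-contraction with modulus exactly $\delta/(1+\delta)<1$, so Banach gives the fixed point. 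Your route buys something real: it is constructive (iterates converge geometrically), it yields uniqueness of the solution of (II) for free and hence an alternative uniqueness argument for detailed-balance prices, and it avoids the compactness/invariance estimates entirely; the resolution of the apparent clash with scale-invariance via $\psi\mapsto c\psi$ shifting $u^*$ by a uniform constant is also correct. What the paper's choice of $K$ buys in exchange is a by-product you do not get directly: the bound $\tilde{r}\le\gamma\tpsi$ on the disparity of equilibrium prices, which the paper records right after the theorem and reuses later (e.g.\ in bounding $\nu(W,U)$ for Theorem~\ref{MarketVsUnwghtd}); with your argument that bound would need a short separate derivation, though it is not part of the lemma being proved. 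One small remark: the "main obstacle" is not step (I), which is routine in both treatments, but the analytic step --- and your contraction neatly trivializes exactly the part the paper has to work for.
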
 \begin{proof} 
Define for every two vertices $i,j$ the value 
\[   \psi_{i,j}=
 \prod_{\ell=1}^k \frac{C_{i_{\ell-1}i_\ell}}{C_{i_\ell i_{\ell-1}}} \; , \]
where $i=i_0,i_1,\ldots,i_k=j$ is a path in the graph; 
we claim this is well defined. Consider another path $i=i_0,i'_1,\ldots,i'_{k'}=j$ and form the cycle $i_0,i_1,\ldots,i_{k-1},j,i'_{k'-1},\ldots,i'_1,i_0$. The claim follows by the circulation-free property of $c$. Consequently we can fix $i_0$ to be a vertex such that $\psi_{i_0,j}\geq 1$ for all $j$, and define 
\be \psi_j=\psi_{i_0,j}
 \label{psi-defn} \ee
(with $\psi_{i_0}=1$). For future reference, observe that $\psi$ satisfies for any edge $i\sim j$ the identity
\be \psi_i C_{ij}=\psi_jC_{ji}\label{psi1} \ee
and that Eq.~\eqref{r-soln} yields by telescoping product another expression for $\psi$:
\be \psi_j=\frac{ r_j^{1+\delta} R_{i_0} }{ r_{i_0}^{1+\delta} R_j} .
\label{psi-tel} \ee

For use now and below we make several definitions: \begin{enumerate} 
\item $\tpsi=\max_j \psi_j = (\max_j \psi_j)/(\min_j \psi_j)$.
This is a \textit{global} measure of the disparity of prices in the network, and may of course be interpreted as a measure of the disparity of the desirability of the various goods.

\item $\gamma=\max_i \sum_j C_{ij}$. This is a \textit{local} measure of the diversity within each utility function. (Recall that each nonzero $C_{ij}$ is at least $1$.)

\item For $p$ a price vector, $p_{\max}=\max_j p_j$, $p_{\min}=\min_j p_j$, and $\tilde{p}=p_{\max}/p_{\min}$.
Let $|\cdot|$ denote geometric mean, so $|p|=\prod_1^n p_i^{1/n}$.

\end{enumerate}

Continuing with the proof, let $K=\{p: |p|=1, \tilde{p} \leq \gamma \tpsi \}$.
Let $f^0:K \to \rset^n$, \[ f^0_j(p)=|p|^{\frac{1+2\delta}{1+\delta}} \left(\psi_j \Pp_j(p)\right)^{\frac{1}{1+\delta}}.\]
Let $f:K \to \rset^n$, $f_j(p)=f^0_j(p)/|f^0(p)|$.
By Eq.~\eqref{psi1}, a fixed point of $f$ is a solution of Eq.~\eqref{r-soln}.

We now show that $f$ maps $K$ into $K$. By construction, $|f(p)|=1$; what we have to show is that $\widetilde{f(p)} \leq \gamma \tpsi$. This is equivalent to showing that $\widetilde{f^0(p)} \leq \gamma \tpsi$. We have
\bean f^0_j(p) &\leq& |p|^{\frac{1+2\delta}{1+\delta}} \left(\frac{\psi_{\max} \gamma }{ p_{\min}^\delta}\right)^{\frac{1}{1+\delta}}  \\
 f^0_j(p) &\geq& |p|^{\frac{1+2\delta}{1+\delta}} \left(\frac{\psi_{\min}}{ p_{\max}^\delta}\right)^{\frac{1}{1+\delta}} \eean
 so for $p \in K$,
 \[ \widetilde{f^0(p)} \leq \left( \tpsi \gamma \tilde{p}^\delta \right)^{\frac{1}{1+\delta}}
 \leq  \left( \tpsi \gamma (\gamma \tpsi)^\delta \right)^{\frac{1}{1+\delta}} = \gamma \tpsi.
 \]
Thus in fact $f:K \to K$. Since $K$ is compact and convex and $f$ is continuous on $K$,  the Brouwer fixed point theorem ensures $f$ has a fixed point in $K$.
\end{proof}

Finally, we compute, using the demand functions Eq.~\eqref{demandG},
 the total demand at $j$ for detailed-balance prices $r$

\be \sum_i d_{ij} =\frac{1}{r_j^{1+\delta}} \sum_i \frac{r_i C_{ij}}{R_i}
= \frac{1}{r_j^{1+\delta}} \sum_i \frac{r_j^{1+\delta} C_{ji}}{r_i^\delta R_j}
= \frac{1}{R_j} \sum_i \frac{C_{ji}}{r_i^\delta} = 1
 \ee
 where in the second equality we have applied  the detailed balance conditions Eq.~\eqref{r-soln}.

This shows that prices $r$ satisfying detailed balance necessarily satisfy the equilibrium conditions Eq.~\eqref{demand-ruleG2}.

By Lemmas~\ref{un},~\ref{det-balG} the Arrow-Debreu market possesses a solution which is unique and which moreover satisfies detailed balance.
\end{proof}

The above arguments imply in particular that in the Arrow-Debreu equilibrium, any two prices are within a factor of $\gamma \tpsi$.

\section{Quadratic expansion of the dynamics in terms of local interactions} \label{sec:dyn}
We now proceed to calculate $D$, the kernel of the dynamics, given by Eq.~\eqref{alphaD}:
\[ \ket{\dot{\alpha}_j} = D_{ji} \ket{\alpha_i} .\]
In order to state the outcome of this calculation it is necessary to define three matrices:
First, the (symmetric) matrix $\ell$ with entries
 \be \ell_{ji}= \sqrt{\frac{ C_{ij}C_{ji}  } {R_i r_i^{\delta}R_j r_j^{\delta}} }. \label{ell-defn} \ee
and its ``complement'' $\Lap=I-\ell$.

Next, the diagonal matrix $B$ with entries
 \be B_{jj} =\frac{r_j^{1+\delta/2}}{R_j^{1/2}\psi_j^{1/2}}.
 \label{rho-defn} \ee
Key to our work is that $\Lap$ contains all information necessary to express the system dynamics; this is encapsulated in Proposition~\ref{L-expansion}, which as we recall states that 
$ B D B^{-1}=q(\delta,\Lap)$, where  $\q(\delta,\lambda)=-(1+2\delta)\lambda+\delta \lambda^2$,
as given in Eq.~\eqref{q-def}. The rest of this section is devoted to proving Proposition~\ref{L-expansion}.

 \begin{proof}
The starting point for the calculation is Eq.~\eqref{demandG}. We consider the diagonal and off-diagonal entries of $D$ separately.

\begin{lemma} The entries $D_{jj}$ are:
\bean D_{jj} &=& 
 \frac{C_{jj}}{R_jr_j^\delta }
- (1+\delta)
+ \sum_i \frac{\delta C_{ij} C_{ji}}{R_i r_i^{\delta}R_jr_j^{\delta}}
\eean
And consequently
\bean D_{jj} &=& -1-\delta+\ell_{jj}+\delta\sum_i\ell_{ji}^2.
\eean
\end{lemma}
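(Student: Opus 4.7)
The plan is to compute $D_{jj}=\sum_i \partial d_{ij}/\partial\alpha_j\big|_{\alpha=0}$ starting from the demand formula $d_{ij}=p_i C_{ij}/(p_j^{1+\delta}\Pp_i(p))$ with $p_j=r_j e^{\alpha_j}$. I would first take logarithmic derivatives to simplify the bookkeeping. There are three sources of $\alpha_j$-dependence: the explicit $\log p_j$ in the denominator (contributing $-(1+\delta)$), the explicit $\log p_i$ in the numerator (contributing $+1$ only when $i=j$), and the implicit dependence through $\Pp_i(p)$ via its $C_{ij}/p_j^\delta$ term (contributing $+\delta C_{ij}/(p_j^\delta \Pp_i(p))$ after canceling the outer minus sign). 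Evaluating at $\alpha=0$,
\[
\frac{\partial \log d_{ij}}{\partial\alpha_j}\bigg|_0 = \mathbf{1}[i=j]-(1+\delta)+\frac{\delta C_{ij}}{r_j^{\delta}R_i}.
\]

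Next I would multiply by $d_{ij}(r)=r_iC_{ij}/(r_j^{1+\delta}R_i)$ and sum over $i$. The $-(1+\delta)$ piece collapses to $-(1+\delta)$ by the equilibrium identity $\sum_i d_{ij}(r)=1$ (Eq.~\eqref{demand-ruleG2}). The $\mathbf{1}[i=j]$ piece contributes $d_{jj}(r)=C_{jj}/(r_j^\delta R_j)$. What remains is the sum $\sum_i \delta\, d_{ij}(r)\cdot C_{ij}/(r_j^\delta R_i)$, which in its raw form is not symmetric in $i,j$.

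The key step — and the main obstacle — is to recognize that detailed balance (Theorem~\ref{exun}, Eq.~\eqref{r-soln}) lets us rewrite the equilibrium demand in the alternative form $d_{ij}(r)=C_{ji}/(r_i^\delta R_j)$. Substituting this into the residual sum instantly turns it into the symmetric expression
\[
\sum_i \frac{\delta\, C_{ij}C_{ji}}{R_i r_i^{\delta}R_j r_j^{\delta}},
\]
yielding the first form of $D_{jj}$. Without detailed balance this symmetrization would not be available, and the subsequent conjugation $B D B^{-1}=q(\delta,\Lap)$ of Proposition~\ref{L-expansion} would not go through — this is precisely why Theorem~\ref{exun} had to be established first.

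Finally, the second (closed-form) expression follows by direct substitution using the definition \eqref{ell-defn} of $\ell$: nonnegativity of $C_{jj}$ gives $\ell_{jj}=C_{jj}/(R_j r_j^\delta)$, and $\ell_{ji}^2=C_{ij}C_{ji}/(R_i r_i^\delta R_j r_j^\delta)$. The rewriting is then a line.
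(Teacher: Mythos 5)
Your proposal is correct and follows essentially the same route as the paper: differentiate the demand formula at the equilibrium $r$ and invoke the detailed-balance relation \eqref{r-soln} to symmetrize the residual term into $\sum_i \delta\, C_{ij}C_{ji}/(R_i r_i^{\delta} R_j r_j^{\delta})$, then read off $\ell_{jj}$ and $\ell_{ji}^2$ from \eqref{ell-defn}. Your logarithmic-derivative bookkeeping and the use of market clearing \eqref{demand-ruleG2} to collapse the $-(1+\delta)$ piece merely streamline the paper's term-by-term differentiation (which applies \eqref{r-soln} to replace $1/R_i$ inside the sum); the substance is identical.
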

\begin{proof}

\bean D_{jj} &=& \sum_i \left. \left(\frac{\partial d_{ij}}{\partial \alpha_j} \right)\right|_{\ket{0}}
\\&=& \sum_i \left. \frac{\partial}{\partial \alpha_j} \right|_{\ket{0}} \frac{p_i C_{ij}}{p_j^{1+\delta} \sum_k C_{ik}/p_k^{\delta}}
\\&=&
 \left. \frac{\partial}{\partial \alpha_j} \right|_{\ket{0}} \left[ \frac{p_j C_{jj}}{p_j^{1+\delta} \sum_k C_{jk}/p_k^{\delta}}
+  \sum_{i \neq j} \frac{p_i C_{ij}}{p_j^{1+\delta} \sum_k C_{ik}/p_k^{\delta}}  \right]
\\
&=& \left. \frac{\partial}{\partial \alpha_j} \right|_{\ket{0}} \left[
\frac{e^{-\delta \alpha_j} C_{jj}}{ C_{jj}e^{-\delta \alpha_j} + r_j^{\delta} \sum_{k \neq j} C_{jk} /r_k^{\delta}}
+  \sum_{i \neq j} \frac{r_i C_{ij}}
{r_je^{\alpha_j} C_{ij}  + r_j^{1+\delta}e^{(1+\delta)\alpha_j}\sum_{k\neq j} C_{ik}/r_k^{\delta}}  \right]
\suppress{
\\ &=& \frac{-\delta (r_j^{\delta} \sum_{k} C_{jk} /r_k^{\delta}) C_{jj} +\delta C_{jj}^2 }
{ (r_j^{\delta} \sum_{k} C_{jk} /r_k^{\delta})^2}
+  \sum_{i \neq j} \frac{-r_i C_{ij}(r_jC_{ij}+(1+\delta)r_j^{1+\delta}\sum_{k\neq j} C_{ik}/r_k^{\delta})}
{( r_j^{1+\delta}\sum_{k} C_{ik}/r_k^{\delta})^2}
}
\\ &=& \frac{-\delta r_j^{\delta} R_j C_{jj}
+\delta C_{jj}^2 }
{ r_j^{2\delta} R_j^2}
+  \sum_{i \neq j} \frac{-r_i C_{ij}(-\delta r_jC_{ij}+(1+\delta)r_j^{1+\delta}R_i)}
{r_j^{2+2\delta}R_i^2}
\suppress{
\\&=& \frac{\delta C_{jj}^2 -\delta r_j^{\delta} R_j C_{jj}}
{ r_j^{2\delta} R_j^2}
+ \sum_{i \neq j} r_i \frac{\delta r_jC_{ij}^2 -(1+\delta) C_{ij}r_j^{1+\delta}R_i}
{r_j^{2+2\delta}R_i^2}
}
\\ &=& \frac{\delta C_{jj}^2}
{ r_j^{2\delta} R_j^2}
-\frac{\delta C_{jj}}{ r_j^{\delta} R_j}
+ \sum_{i \neq j} \frac{ 1}{R_i} r_i \left( \frac{\delta r_jC_{ij}^2}
{r_j^{2+2\delta}R_i}
-
 \frac{(1+\delta) C_{ij}}
{r_j^{1+\delta}}  \right)
\eean

We replace the factor $\frac{1}{R_i}$ inside the last summation by a term depending on $R_j$ using Eq.~\eqref{r-soln} which gives
$R_i= \frac{ r_i^{1+\delta} C_{ij}R_j}
{r_j^{1+\delta} C_{ji}}$.
So
\bean D_{jj} &=&  \frac{\delta C_{jj}^2}
{ r_j^{2\delta} R_j^2}
-\frac{\delta C_{jj}}{ r_j^{\delta} R_j}
+  \sum_{i \neq j} r_i
\left(\frac{\delta r_jC_{ij}^2 r_j^{1+\delta} C_{ji}}
{r_j^{2+2\delta}R_i r_i^{1+\delta} C_{ij}R_j}
-
 \frac{(1+\delta) C_{ij} r_j^{1+\delta} C_{ji}}
{r_j^{1+\delta} r_i^{1+\delta} C_{ij}R_j}
  \right)
\\ &=&
 \frac{\delta C_{jj}^2}
{ r_j^{2\delta} R_j^2}
-\frac{\delta C_{jj}}{ r_j^{\delta} R_j}
+   \sum_{i \neq j}
\left(\frac{\delta C_{ij} C_{ji}}
{r_j^{\delta}R_i r_i^{\delta}R_j}
-
 \frac{(1+\delta) C_{ji}}
{R_j r_i^{\delta} }
  \right)
\\&=& \frac{C_{jj}}{R_jr_j^\delta }
- \sum_i \frac{(1+\delta) C_{ji}}{R_j r_i^{\delta} }
+ \sum_i \frac{\delta C_{ij} C_{ji}}{R_i r_i^{\delta}R_jr_j^{\delta}}
\\&=& \frac{C_{jj}}{R_jr_j^\delta }
- (1+\delta)
+ \sum_i \frac{\delta C_{ij} C_{ji}}{R_i r_i^{\delta}R_jr_j^{\delta}}
\eean
\end{proof}
\begin{lemma} The entries $D_{jk}$, $k \neq j$ are:
\bean D_{jk} &=& \frac{ r_k C_{kj}}
{r_j^{1+\delta} R_k}
+ \frac{\delta  r_k^{1-\delta} C_{kk} C_{kj}}
{r_j^{1+\delta} R_k^2}
+ \sum_{i \neq k} \frac{\delta r_i C_{ij}C_{ik} }
{r_k^\delta r_j^{1+\delta}R_i^2}
\eean
\end{lemma}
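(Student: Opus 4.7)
The plan is to compute $D_{jk} = \sum_i \partial d_{ij}/\partial \alpha_k \big|_{\ket{0}}$ directly from the demand formula Eq.~\eqref{demandG}, in exact parallel to the preceding diagonal-entry lemma but now exploiting the asymmetry that $k \neq j$. The key observation is that $\alpha_k$ enters the expression $d_{ij}(p) = p_i C_{ij}/(p_j^{1+\delta}\sum_\ell C_{i\ell}/p_\ell^\delta)$ in two structurally different ways depending on whether the outer index $i$ equals $k$ or not, and (because $k\neq j$) $\alpha_k$ does \emph{not} enter through the $p_j^{1+\delta}$ factor at all. I would therefore split the sum over $i$ into the single term $i=k$ and the remaining terms $i\neq k$, and handle the two pieces separately.

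For the $i=k$ term, substituting $p_k = r_k e^{\alpha_k}$ and $p_\ell = r_\ell$ for $\ell \neq k$ exposes $\alpha_k$ in the numerator as $e^{\alpha_k}$ and in the denominator as $C_{kk}r_k^{-\delta}e^{-\delta \alpha_k}$ (the $\ell=k$ summand). A quotient rule evaluation at $\alpha_k=0$, using that the denominator equals $r_j^{1+\delta} R_k$ and that its derivative contributes $-\delta r_j^{1+\delta} C_{kk} r_k^{-\delta}$, produces exactly the first two stated summands $\frac{r_k C_{kj}}{r_j^{1+\delta} R_k}$ and $\frac{\delta r_k^{1-\delta} C_{kk} C_{kj}}{r_j^{1+\delta} R_k^2}$.

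For each $i \neq k$ term, the numerator $p_i C_{ij} = r_i C_{ij}$ does not depend on $\alpha_k$; the variable $\alpha_k$ appears only in the single denominator summand $C_{ik} r_k^{-\delta} e^{-\delta \alpha_k}$. Differentiating and evaluating at $\alpha_k=0$ gives $-r_i C_{ij} \cdot (-\delta r_j^{1+\delta} C_{ik} r_k^{-\delta})/(r_j^{1+\delta} R_i)^2$, which simplifies to $\frac{\delta r_i C_{ij} C_{ik}}{r_k^\delta r_j^{1+\delta} R_i^2}$; summing over $i\neq k$ produces the third stated term.

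There is no genuine obstacle here — the lemma is just a mechanical derivative computation. The only thing to be careful about is the bookkeeping of the two different ways $\alpha_k$ enters for $i=k$ (both numerator and denominator) versus $i\neq k$ (denominator only), and not conflating the summation index $\ell$ inside $R_i = \sum_\ell C_{i\ell}/r_\ell^\delta$ with the differentiation index $k$. Unlike the diagonal lemma, no appeal to the detailed balance identity Eq.~\eqref{r-soln} is needed at this stage; the expression is left in its ``raw'' form since the detailed balance substitution will be applied later when $\Lap$ and the factor $q(\delta,\Lap)$ are assembled in the proof of Proposition~\ref{L-expansion}.
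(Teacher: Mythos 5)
Your proposal is correct and matches the paper's own proof essentially verbatim: split the sum over $i$ into the $i=k$ term and the $i\neq k$ terms, use that $p_j^{1+\delta}$ is independent of $\alpha_k$ since $k\neq j$, and evaluate the quotient-rule derivatives at $\alpha=\ket{0}$, yielding exactly the three stated summands. Your remark that detailed balance is not needed here (unlike in the simplification of the diagonal entries) also agrees with the paper, which defers Eq.~\eqref{r-soln} to the assembly of $BDB^{-1}$.
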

\begin{proof}
\bean D_{jk} &=& \sum_i \left. \left(\frac{\partial d_{ij}}{\partial \alpha_k} \right)\right|_{\ket{0}}
\\&=& \sum_i \left. \frac{\partial}{\partial \alpha_k} \right|_{\ket{0}} \frac{p_i C_{ij}}{p_j^{1+\delta} \sum_h C_{ih}/p_h^{\delta}}
\\&=&
 \left. \frac{\partial}{\partial \alpha_k} \right|_{\ket{0}} \left[ \frac{p_k C_{kj}}{p_j^{1+\delta} \sum_h C_{kh}/p_h^{\delta}}
+  \sum_{i \neq k} \frac{p_i C_{ij}}{p_j^{1+\delta} \sum_h C_{ih}/p_h^{\delta}}  \right]
\\ &=&
 \left. \frac{\partial}{\partial \alpha_k} \right|_{\ket{0}} \left[ \frac{r_k e^{\alpha_k} C_{kj}}{r_j^{1+\delta}(C_{kk}e^{-\delta \alpha_k} /r_k^\delta+ \sum_{h \neq k} C_{kh}/r_h^{\delta})}
+ \sum_{i \neq k} \frac{r_i C_{ij}}{r_j^{1+\delta}(C_{ik}e^{-\delta \alpha_k} /r_k^\delta+ \sum_{h\neq k} C_{ih}/r_h^{\delta})}  \right]
\suppress{
\\ &=&
 \frac{((r_j^{1+\delta} \sum_{h} C_{kh}/r_h^{\delta})
+\delta r_j^{1+\delta} C_{kk}/r_k^\delta)r_k C_{kj}}
{(r_j^{1+\delta} \sum_{h} C_{kh}/r_h^{\delta})^2}
+ \sum_{i \neq k} \frac{\delta r_i C_{ij}r_j^{1+\delta}C_{ik}/r_k^\delta }
{(r_j^{1+\delta}(\sum_{h} C_{ih}/r_h^{\delta}))^2}
}
\\ &=& \frac{(r_j^{1+\delta} R_k
+\delta r_j^{1+\delta} C_{kk}/r_k^\delta)r_k C_{kj}}
{r_j^{2+2\delta} R_k^2}
+ \sum_{i \neq k} \frac{\delta r_i C_{ij}r_j^{1+\delta}C_{ik}/r_k^\delta }
{r_j^{2+2\delta}R_i^2}
\\ &=& \frac{ r_k C_{kj}}
{r_j^{1+\delta} R_k}
+ \frac{\delta  r_k^{1-\delta} C_{kk} C_{kj}}
{r_j^{1+\delta} R_k^2}
+ \sum_{i \neq k} \frac{\delta r_i C_{ij}C_{ik} }
{r_k^\delta r_j^{1+\delta}R_i^2}
\eean
\end{proof}
The dynamics matrix $D$ is not symmetric but we can symmetrize it by
the change of basis $\ket{\beta}=B \ket{\alpha}$. So we study the dynamics on $\beta$ obtained by similarity transform:
\be \ket{\dot{\beta}}=B D B^{-1} \ket{\beta} \label{betadyn} \ee

Now
\bean (B D B^{-1})_{jk} &=&
 \sqrt{\frac{R_k\psi_k r_j^{2+\delta}}{R_j\psi_j r_k^{2+\delta}}}
\left( \frac{r_k C_{kj}}
{r_j^{1+\delta} R_k}
+ \frac{\delta r_k^{1-\delta} C_{kk} C_{kj}}
{r_j^{1+\delta} R_k^2}
+ \sum_{i \neq k} \frac{1}{R_i} \frac{\delta r_i C_{ij}C_{ik} }
{r_k^\delta r_j^{1+\delta}R_i}
\right)
\eean
We replace the factor $\frac{1}{R_i}$ inside the last summation by a term depending on $R_k$ using Eq.~\eqref{r-soln}, yielding
\bean (B D B^{-1})_{jk} &=&
\suppress{
  \sqrt{\frac{R_k\psi_k r_j^{2+\delta}}{R_j\psi_j r_k^{2+\delta}}}
\left( \frac{ r_k C_{kj}}
{r_j^{1+\delta} R_k}
+ \frac{\delta r_k^{1-\delta} C_{kk} C_{kj}}
{r_j^{1+\delta} R_k^2}
+ \sum_{i \neq k} \frac{\delta s_ir_i C_{ij}C_{ik} r_k^{1+\delta} C_{ki}}
{r_k^\delta r_j^{1+\delta}R_i s_i r_i^{1+\delta} C_{ik}R_k}
\right)
\\ &=&
}
  \sqrt{\frac{R_k\psi_k r_j^{2+\delta}}{R_j\psi_j r_k^{2+\delta}}}
\left( \frac{ r_k C_{kj}}
{r_j^{1+\delta} R_k}
+ \frac{\delta r_k^{1-\delta} C_{kk} C_{kj}}
{r_j^{1+\delta} R_k^2}
+ \sum_{i \neq k} \frac{\delta C_{ij} r_k C_{ki}}
{r_j^{1+\delta}R_i r_i^{\delta} R_k} \right)
\\ &=&
  \sqrt{\frac{R_k\psi_k r_j^{2+\delta}}{R_j\psi_j r_k^{2+\delta}}}
 \frac{ r_k C_{kj}}
{r_j^{1+\delta} R_k}
+ \sqrt{\frac{R_k\psi_k r_j^{2+\delta}}{R_j\psi_j r_k^{2+\delta}}} \frac{\delta r_k^{1-\delta} C_{kk} C_{kj}}
{r_j^{1+\delta} R_k^2}
+ \sum_{i \neq k}
 \sqrt{\frac{\psi_k}{R_jR_k \psi_j r_k^{\delta}r_j^{\delta}}}
\frac{\delta C_{ij}  C_{ki}}
{R_i r_i^{\delta}}
\eean
Apply the identity Eq.~\eqref{psi1} to the edges $j \sim k$ in the first two terms, and $i\sim j$ and $i\sim k$ in the third term.
\bean (B D B^{-1})_{jk} &=&
  \sqrt{\frac{ C_{jk} C_{kj}}{R_jr_j^{\delta}R_k r_k^{\delta} }  }
+ \sqrt{\frac{\delta C_{kj}C_{jk}}{R_j r_j^\delta R_k r_k^\delta}}\sqrt{\frac{\delta C_{kk}^2}{R_k^2 r_k^{2\delta}}}
+ \sum_{i \neq k}
 \sqrt{\frac{\delta C_{ij}C_{ji}  }
{R_i r_i^{\delta}R_j r_j^{\delta}} }
\sqrt{\frac{\delta C_{ki}C_{ik}}{R_i r_i^\delta R_kr_k^{\delta}}}
\\ &=&
  \sqrt{\frac{ C_{jk} C_{kj}}{R_jr_j^{\delta}R_k r_k^{\delta} }  }
+ \sum_{i}
 \sqrt{\frac{\delta C_{ij}C_{ji}  }
{R_i r_i^{\delta}R_j r_j^{\delta}} }
\sqrt{\frac{\delta C_{ki}C_{ik}}{R_i r_i^\delta R_kr_k^{\delta}}}
\\&=&  \ell_{jk} + \delta \sum_i \ell_{ji}\ell_{ik}
\eean
\subsection*{The quadratic expansion formula}
Collecting the calculations of $D_{jj}$ and $D_{jk}$, we have:
\bean (B D B^{-1})_{jk} &=& \begin{cases}
-1-\delta+\ell_{jj}+\delta\sum_i\ell_{ji}^2 \quad & \text{if } j=k \\
\ell_{jk} + \delta \sum_i \ell_{ji}\ell_{ik} & \text{if } j \neq k \end{cases} \eean

Finally, we can complete the proof of Proposition~\ref{L-expansion}: $-(1+2\delta) \Lap+ \delta \Lap^2 =-(1+\delta)I + \ell + \delta \ell^2 = B D B^{-1}$.
\end{proof}

\section{The dynamics in terms of the market Laplacian $\Lap$}
In general terms Proposition~\ref{L-expansion} gives what we have been seeking: an expression for the system dynamics, in terms of a symmetric matrix whose off-diagonal entries are supported on the edges of the network. In order to make this more quantitative we start by fulfilling our earlier promise and showing how $ \Lap$ may be represented as the Laplacian $\Lap=\LAP(W)$ of a suitable edge-weighting $W$ of the network. As described earlier, this requires that
$ \Lap = I - w^{-1} W w^{-1} $, equivalently
\[ \ell= w^{-1} W w^{-1} \]
where $w=\sigma(W)$. It will simplify notation to write $w_i=w_{ii}$.

Now equivalently, with $\ket{w_i}=w\ket{1}$ denoting the vector containing the diagonal entries of $w$,
\[ w_i^2=\sum_j  w_i \ell_{ij} w_j \]
\[ \ket{w_i}= \ell \ket{w_i}. \]
So we wish to identify the kernel of $I-\ell=\Lap$. We show that this kernel equals
$B \ket{1}$. First, in order to verify that $B \ket{1}$ is in the kernel, we obtain from
Eqs.~\eqref{rho-defn} and \eqref{psi-tel}
that for all $k$,
\be B_{kk} =  \sqrt{ \frac{r_k r_{i_0}^{1+\delta}}
{R_{i_0} } } \label{Bagain} \ee
(where $i_0$ is any fixed vertex as defined in Lemma~\ref{det-balG}). Now
\[ (\Lap B \ket{1})_j = \sqrt{\frac{ r_{i_0}^{1+\delta}}{R_{i_0} }} \sum_k \sqrt{r_k } \Lap_{jk}
= \sqrt{\frac{ r_{i_0}^{1+\delta}}{R_{i_0} }}  \left(\sqrt{r_j } - \sum_{k} \sqrt{r_k } \ell_{jk} \right)
\]
If we combine Eqs~\eqref{r-soln} and \eqref{ell-defn} we see that $\ell_{jk}=\frac{C_{jk} \sqrt{r_j}}{R_j\sqrt{r_k^{1+2\delta}}}$. Substituting, we have
\bean (\Lap B \ket{1})_j &=&  \sqrt{\frac{r_{i_0}^{1+\delta}}{R_{i_0} }}  \sqrt{r_j} (1-\frac{1}{R_j} \sum_k \frac{C_{jk}}{r_k^\delta}) = 0 \eean
as required.

Thus we have that $w_i=B_{ii}$, which is to say $w=B$; and that the matrix $W$ formed by
\be W=B\ell B \label{WfromB} \ee
is the weighted adjacency matrix corresponding to the Laplacian $\Lap$. Since all $B_{ii}$ are positive and the nonzero off-diagonal entries of $\Lap$ form a connected graph, the kernel of $\Lap$ is of rank $1$, and all remaining eigenvalues of $\Lap$ are positive. As is well known, all are bounded above by $2$.

\subsection{The corresponding random walk}
Although not strictly required for our work, we pause to describe the random walk which corresponds to the undirected edge-weighted graph $W$. Acting as r.w.\ on column vectors, it is the stochastic matrix obtained by rescaling each column $i$ of $W$ by $w_i^2$, equivalently $B_{ii}^2$, namely
$WB^{-2}$. This in turn can be rewritten as
$B \ell B^{-1}$. One may readily verify that $\bra{1}$ is invariant:
 \[ \bra{1}B\ell B^{-1} = \bra{1} W B^{-2}=\bra{B_{ii}^2}B^{-2} = \bra{1} \]
and the corresponding right eigenvector, the stationary distribution, is $\ket{B_{ii}^2}$:
\[ B\ell B^{-1}\ket{B_{ii}^2} = W B^{-2}\ket{B_{ii}^2} = W \ket{1} = \ket{B_{ii}^2} .\]
Our detailed balance condition agrees here with the detailed balance condition of the random walk, namely that for any $i \sim j$, the frequency of transitions in each direction across the edge are equal, which we can verify here by:
\[ (W B^{-2})_{ji} B_{ii}^2 = W_{ji} = W_{ij} =
(W B^{-2})_{ij} B_{jj}^2\]

\subsection{Proof of Theorem~\ref{precise}: bounds on $\lambda_{\downarrow 2}(B D B^{-1})$ in terms of $\lambda_{\uparrow 2}(\Lap)$} \label{sec:genbds}
Ultimately what we are interested in is
 the second-largest eigenvalue of $B D B^{-1}$ (the largest being $0$ and corresponding to the ray of equilibrium states), denoted $\lambda_{\downarrow 2}(B D B^{-1})$, which we called earlier the damping rate of the market, because it scales as $-1/T$ for $T$ the half-life of a perturbation from equilibrium. (It is convenient that the eigenvalues are real, but of course the familiar property of linear systems that we are exploiting and quantifying here is that the eigenvalues are in the open left-half plane.)

Due to Proposition~\ref{L-expansion}
every eigenvalue $\lambda$ of $\Lap$ maps to an eigenvalue
$\q(\delta,\lambda)=-(1+2\delta)\lambda+\delta \lambda^2$ of $B D B^{-1}$.
The mapping $\q$ is monotone decreasing in $\lambda$ throughout $[0,1+\frac{1}{2\delta}]$; if
$1+\frac{1}{2\delta}<2$ it then rises, symmetrically, to $-2$ at $\lambda=2$, and it also equals $-2$ at $\frac{1}{\delta}$. (See Figure.) Thus, sufficient conditions that
$\lambda_{\downarrow 2}(B D B^{-1})=\q(\delta,\lambda_{\uparrow 2}(\Lap))$ include that (a)
The spectrum of $\Lap$ is contained in $[0,1+\frac{1}{2\delta}]$, or (b) $\lambda_{\uparrow 2}(\Lap) \leq
\frac{1}{\delta}$.

Clause (a) will occur if $W$ is  ``far from bipartite'', in particular if there is sufficient local consumption of goods (i.e., the coefficients $C_{jj}$ are large enough). Clause (b) is in particular guaranteed if $\delta\leq 1/2$.

Even outside these favorable cases, note that for any $\lambda>1+\frac{1}{2\delta}$, $q(\delta,\lambda)\leq -2$. Consequently in all cases:
\be
q(\delta,\lambda_{\uparrow 2}(\Lap ))
\leq
 \lambda_{\downarrow 2}(B D B^{-1})
\leq \max\{
q(\delta,\lambda_{\uparrow 2}(\Lap )), -2\}
\label{bdb-lap} \ee

This proves Theorem~\ref{precise}.
\begin{center}
\includegraphics[height=45mm]{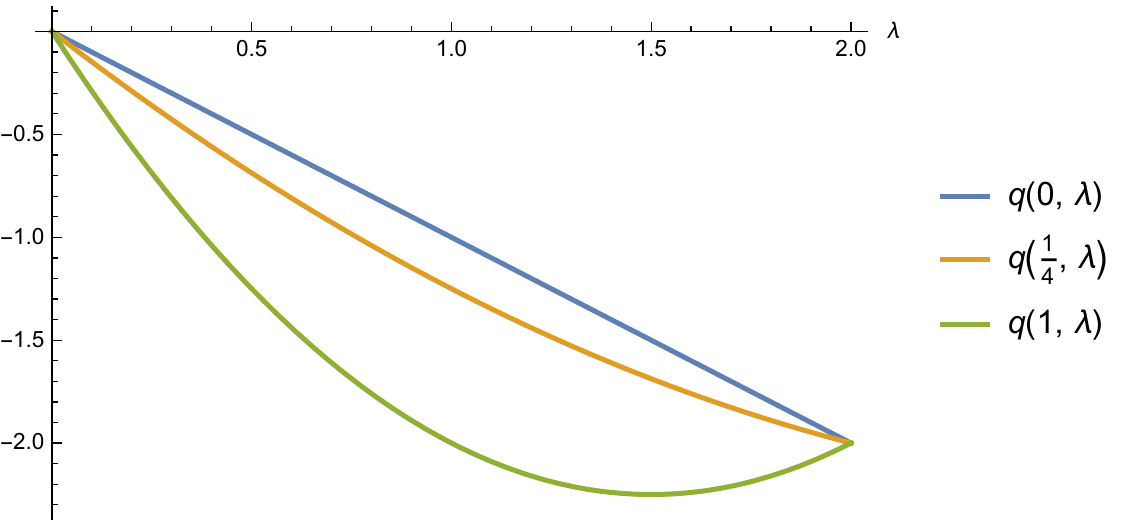}
\end{center}

\subsection{Proof of Theorem~\ref{thm:unif}: the ``uniform'' special case}
Earlier we referred to the special case that all $C_{ij} \in \{0,1\}$ and there is a $\Delta>1$ s.t.\ for all $i$, $|\{j: C_{ij}=1\}|=|\{j: C_{ji}=1\}|=\Delta$. In this case all $r_i=1$, $B$ is the identity, $\ell=W$, and $W_{ij}=0$ unless $i \sim j$ in which case $W_{ij}=1/\Delta$. Then $\Lap=\LAP(U)$. So Eq.~\eqref{bdb-lap} proves Theorem~\ref{thm:unif}.

\section{Comparison theorems: the market Laplacian $\Lap$ vs.\ the unweighted Laplacian $\LAP(U)$ and the equilibrium prices Laplacian $\LAP(\RG)$.} We have been working so far with a Laplacian $\Lap$ which carries very detailed information about the parameters $C$ of the market, and moreover, depends on those parameters in an indirect way (through the equilibrium prices).
Generally however, when studying a market, we may not know $C$ or $\Lap$; and things will certainly not be as simple as in Theorem~\ref{thm:unif}.
 One of our goals is to have (still two-sided) bounds on the convergence time which are weaker than Theorem~\ref{precise} but require less edge-by-edge information: only the connectivity structure of the market (i.e., the matrix $U$ of the unweighted network)---which is revealed from actual consumption---and, optionally, the equilibrium prices. The key to this is Lemma~\ref{lap:stab}, which will enable us in this Section to bound  $\lambda_{\downarrow 2}(B D B^{-1})$ in terms of the unweighted network Laplacian $\LAP(U)$ or the equilibrium prices Laplacian $\LAP(\RG)$.

In both results we pay an approximation factor polynomial in the ``utility function complexity'' $\gamma$; in the former result, we also pay a factor proportional to the ``price disparity'' $\tpsi$
(both parameters were defined in Section~\ref{seq:eq}). Both these dependences are necessary, as we discuss in the next Section.

First we exhibit that $\tpsi$ can be exponential in the network size even though $\gamma$ is constant.
It is therefore advantageous, in applying these results, to know the equilibrium prices.
\begin{example} Take $\delta=1$. Fix any $a>1$ and create a market among participants $1,\ldots,n$ arranged in a chain, as follows. $C_{ij}$ is nonzero only for $|i-j|\leq 1$. For such $i,j$, $C_{ij}=a^{j-i}$.
Up to some edge-effects, prices in this network are proportional to $a^{2i}$. Thus $\tpsi \in \Theta(a^{2n})$.
\end{example}
Before proving Theorem~\ref{MarketVsUnwghtd}
 we first establish two lemmas about the equilibrium prices:
 \begin{lemma} $\max_{i \sim j} \frac{r_i}{r_j} \leq \gamma $. \label{rratio} \end{lemma}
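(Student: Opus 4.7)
The plan is to analyze the extremal adjacent pair. Let $M = \max_{i \sim j} r_i/r_j$ and let $(i^*, j^*)$ be an adjacent pair achieving this maximum, so $r_{i^*} = M r_{j^*}$. The key observation is that applying the definition of $M$ to the edges incident to $i^*$ forces every neighbor $k$ of $i^*$ to satisfy $r_{i^*}/r_k \leq M$, and hence $r_k \geq r_{i^*}/M = r_{j^*}$.

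From here I would derive two opposing estimates on $R_{i^*} = \sum_k C_{i^* k}/r_k^\delta$. For an upper bound, every index $k$ appearing in this sum is a neighbor of $i^*$ with $r_k \geq r_{j^*}$; since $\delta>0$, pulling the common factor $1/r_{j^*}^\delta$ out of the sum and invoking the definition of $\gamma$ gives $R_{i^*} \leq \gamma/r_{j^*}^\delta$. For a matching lower bound, I would apply the equilibrium condition Eq.~\eqref{demand-ruleG2} at good $j^*$ and retain only the $k = i^*$ term on the right-hand side (all terms are nonnegative). Using the normalization from Section~\ref{model} that $C_{i^* j^*} \geq 1$, this yields $r_{j^*}^{1+\delta} \geq r_{i^*}/R_{i^*}$, which rearranges to $R_{i^*} \geq r_{i^*}/r_{j^*}^{1+\delta} = M/r_{j^*}^\delta$.

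Chaining the two estimates cancels the $r_{j^*}^\delta$ and produces $M \leq \gamma$, which is the claim. There is no substantive obstacle: the only real idea is to commit to the extremal pair $(i^*, j^*)$, which turns the definition of $M$ into a uniform price lower bound for all neighbors of $i^*$ and thereby lets the upper bound on $R_{i^*}$ be expressed cleanly in terms of $r_{j^*}$ and $\gamma$; the remainder is a one-line manipulation of the equilibrium equation at $j^*$.
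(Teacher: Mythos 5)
Your proof is correct, and it takes a mildly but genuinely different route from the paper's. The paper applies the detailed-balance identity Eq.~\eqref{r-soln} to the extremal edge, writing $\mu^{1+\delta}=C_{ji}R_i/(C_{ij}R_j)$ for $\mu$ the maximal adjacent price ratio, and then uses extremality in \emph{both} directions (neighbors of $i$ have price at least $r_i/\mu$, neighbors of $j$ have price at most $\mu r_j$) to cancel a factor of $\mu^{\delta}$, finishing with $C_{ji}\leq\sum_\ell C_{j\ell}$, the normalization $C_{ij}\geq 1$, and $\sum_k C_{ik}\leq\gamma$. You instead use only the market-clearing equation Eq.~\eqref{demand-ruleG2} at the good $j^*$, drop all terms except the $i^*$ term by nonnegativity to get $R_{i^*}\geq M/r_{j^*}^{\delta}$, and pair this with the one-sided estimate $R_{i^*}\leq\gamma/r_{j^*}^{\delta}$ coming from extremality at $i^*$ alone. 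What your version buys: it never invokes detailed balance, so it works verbatim for any market-clearing price vector of a normalized market of this form, independently of the detailed-balance characterization of Theorem~\ref{exun}, and it needs the price bound only for neighbors of $i^*$, not of $j^*$; the paper's version stays inside the detailed-balance formalism that the surrounding sections rely on, and both are equally short. One micro-detail you glossed: if $i^*$ carries a self-loop, the term $k=i^*$ in $R_{i^*}$ requires $r_{i^*}\geq r_{j^*}$, which holds because $M\geq 1$ (the edge is weakly undirected, so $r_{j^*}/r_{i^*}\leq M$ as well, forcing $M^2\geq 1$); this is immediate but deserves a clause.
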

\begin{proof}
Let $i\sim j$ be such that $\mu= \frac{r_i}{r_j} =\max_{i' \sim j'} \frac{r_{i'}}{r_{j'}}$. Applying Eq.~\eqref{r-soln}:
\[ \mu^{1+\delta}=
\frac{r_i^{1+\delta}}{ r_j^{1+\delta}}
=\frac{C_{ji} R_i} {C_{ij} R_j}
= \frac{C_{ji} \sum_k C_{ik}/r_k^{\delta}} {C_{ij} \sum_\ell C_{j\ell}/r_\ell^\delta}
\leq \left(\frac{\mu^2 r_j}{r_i}\right)^\delta
\frac{C_{ji} \sum_k C_{ik}} {C_{ij} \sum_\ell C_{j\ell}}
= \mu^\delta \frac{C_{ji} \sum_k C_{ik}} {C_{ij} \sum_\ell C_{j\ell}}
\]
and applying $C_{ji}\leq \sum_\ell C_{j\ell}$, the lemma follows. \end{proof}

\begin{lemma} For any $i \sim j$, $\gamma^{-2-2\delta} \leq \frac{C_{ij}C_{ji}} {r_i^\delta R_i r_j^\delta R_j} \leq 1$.
That is,
$\gamma^{-1-\delta}
\leq \ell_{ij} \leq 1$. \label{lbd} \end{lemma}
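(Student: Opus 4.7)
\textbf{Proof plan for Lemma~\ref{lbd}.} Since $\ell_{ij}^2 = \frac{C_{ij}C_{ji}}{r_i^\delta R_i \cdot r_j^\delta R_j}$, it suffices to prove the double inequality $1 \leq r_i^\delta R_i \cdot r_j^\delta R_j / (C_{ij}C_{ji}) \leq \gamma^{2+2\delta}$. I will handle the two sides separately, with the upper bound on $\ell_{ij}$ being essentially definitional, and the lower bound using Lemma~\ref{rratio} together with the normalization $\sum_k C_{ik}\leq \gamma$.

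\emph{Upper bound} $\ell_{ij}\leq 1$. This follows by isolating a single term in each sum. Since every summand is nonnegative,
\[
R_i = \sum_k C_{ik}/r_k^\delta \geq C_{ij}/r_j^\delta, \qquad R_j = \sum_k C_{jk}/r_k^\delta \geq C_{ji}/r_i^\delta.
\]
Multiplying these two inequalities and rearranging gives $r_i^\delta r_j^\delta R_i R_j \geq C_{ij} C_{ji}$, which is exactly $\ell_{ij}^2\leq 1$.

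\emph{Lower bound} $\ell_{ij}\geq \gamma^{-1-\delta}$. The key single-vertex inequality I will prove is
\[
r_i^\delta R_i \leq \gamma^{1+\delta} \qquad\text{for every $i$.}
\]
To see this, rewrite
\[
r_i^\delta R_i = \sum_{k\sim i} C_{ik}\,(r_i/r_k)^\delta.
\]
For each $k\sim i$, Lemma~\ref{rratio} gives $r_i/r_k \leq \gamma$, so $(r_i/r_k)^\delta \leq \gamma^\delta$. Combining with the bound $\sum_k C_{ik}\leq \gamma$ (the definition of $\gamma$) yields $r_i^\delta R_i \leq \gamma^\delta\cdot \gamma = \gamma^{1+\delta}$, as claimed. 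Applying the same bound at $j$ gives $r_i^\delta R_i \cdot r_j^\delta R_j \leq \gamma^{2+2\delta}$. Finally, because $C_{ij}$ and $C_{ji}$ are nonzero and the normalization from Section~\ref{model} fixes $\min_{k: C_{ik}>0} C_{ik} = 1$, we have $C_{ij}\geq 1$ and $C_{ji}\geq 1$, so $C_{ij}C_{ji}\geq 1$. Therefore
\[
\ell_{ij}^2 = \frac{C_{ij}C_{ji}}{r_i^\delta R_i \cdot r_j^\delta R_j} \geq \gamma^{-2-2\delta},
\]
which gives the stated lower bound after taking square roots.

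The only nontrivial step is the per-vertex estimate $r_i^\delta R_i \leq \gamma^{1+\delta}$, and that reduces immediately once Lemma~\ref{rratio} is in hand. Everything else is bookkeeping with the normalizations already fixed in Section~\ref{model}.
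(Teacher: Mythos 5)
Your proof is correct and follows essentially the same route as the paper: the upper bound by keeping only the single term $C_{ij}/r_j^\delta$ (resp.\ $C_{ji}/r_i^\delta$) in $R_i$ (resp.\ $R_j$), and the lower bound by using Lemma~\ref{rratio} to get $r_i^\delta R_i \leq \gamma^\delta \sum_k C_{ik} \leq \gamma^{1+\delta}$ together with $C_{ij}C_{ji}\geq 1$ from the normalization. No gaps; your per-vertex restatement $r_i^\delta R_i \leq \gamma^{1+\delta}$ is just a slightly more explicit packaging of the paper's inequality $R_i \leq (\gamma/r_i)^\delta \sum_k C_{ik}$.
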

\begin{proof}
The upper bound follows by dropping most terms in the denominator, leaving only $r_i^\delta \frac{C_{ij}} {r_j^\delta} r_j^\delta \frac{C_{ji}}{ r_i^\delta}$.

For the lower bound we apply Lemma~\ref{rratio} to get $R_i \leq (\gamma /r_i)^{\delta}
\sum_k C_{ik}$. Applying the same to $R_j$ yields
$ \frac{C_{ij}C_{ji}} {r_i^\delta R_i r_j^\delta R_j} \geq
 \frac{C_{ij}C_{ji}}{\gamma^{2\delta} (\sum_k C_{ik})(\sum_\ell C_{j \ell})} \geq \gamma^{-2-2\delta}
$. \end{proof}

As $D$ and $BDB^{-1}$ are cospectral, all bounds from now on are stated in terms of the latter which, being symmetric, is easier to work with. So we proceed with the proof of  Theorem~\ref{MarketVsUnwghtd}, replacing $\lambda_{\downarrow 2}(D)$
in it with  $\lambda_{\downarrow 2}(B D B^{-1})$.

\begin{proof}
The proof breaks into two lemmas. The first is a general bound on the damping rate of our market (with adjacency matrix $W$ and Laplacian $\Lap=\LAP(W)$), in terms of two features of any other adjacency matrix $W'$: the spectrum of its Laplacian, and  $\nu(W,W')$. The second lemma bounds $\nu(W,U)$ and $\nu(W,\RG)$.

\begin{lemma} \label{any-nu} Let $\nu=\nu(W,W')$. Then
\bean q(\delta, \min\{\nu \lambda_{\uparrow 2}(\LAP(W')),1+\frac{1}{2\delta},1+\frac{1}{n-1}\})
& \leq &
 \lambda_{\downarrow 2}(B D B^{-1}) \\
& \leq &
\max\{q(\delta, \lambda_{\uparrow 2}(\LAP(W'))/\nu),-2\}. \eean
\end{lemma}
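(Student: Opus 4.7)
My plan is to derive both inequalities by combining Theorem~\ref{precise}, Lemma~\ref{lap:stab}, and the elementary spectral bound $\lambda_{\uparrow 2}(\LAP(W)) \leq 1 + \tfrac{1}{n-1}$. The latter follows from $\Tr \LAP(W) = n - \sum_i W_{ii}/w_i^2 \leq n$, so $(n-1)\lambda_{\uparrow 2}(\LAP(W)) \leq \sum_{i\geq 2}\lambda_{\uparrow i}(\LAP(W)) \leq n$. Applying Lemma~\ref{lap:stab} in both directions (using $\nu(W,W')=\nu(W',W)$) and combining with the trace bound yields
\[
\frac{\lambda_{\uparrow 2}(\LAP(W'))}{\nu} \;\leq\; \lambda_{\uparrow 2}(\Lap) \;\leq\; \min\!\left\{\nu\,\lambda_{\uparrow 2}(\LAP(W')),\; 1+\tfrac{1}{n-1}\right\}.
\]
Once these sandwich inequalities for $\lambda_{\uparrow 2}(\Lap)$ are in hand, the rest is a case analysis tracking how $q(\delta,\cdot)$ transports them through Theorem~\ref{precise}. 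Recall that $q(\delta,\lambda)=-(1+2\delta)\lambda+\delta\lambda^2$ has vertex at $\lambda_{*}:=1+\tfrac{1}{2\delta}$: it decreases on $[0,\lambda_{*}]$ from $0$ to its global minimum $q(\delta,\lambda_{*})=-1-\delta-\tfrac{1}{4\delta}\leq -2$, and then increases back to $q(\delta,2)=-2$. In particular $q(\delta,\lambda)\leq -2$ throughout $[\lambda_{*},2]$.

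For the \emph{lower bound}, write $M$ for the three-way minimum in the statement, and observe that $M\leq\lambda_{*}$. If $M<\lambda_{*}$, then $M$ is realized by one of the other two arguments, both of which bound $\lambda_{\uparrow 2}(\Lap)$ from above; hence $\lambda_{\uparrow 2}(\Lap)\leq M\leq\lambda_{*}$, and monotonicity of $q$ on $[0,\lambda_{*}]$ yields $q(\delta,\lambda_{\uparrow 2}(\Lap))\geq q(\delta,M)$. If instead $M=\lambda_{*}$, then $q(\delta,M)$ is the global minimum of $q$ and the inequality is automatic. In either case, Theorem~\ref{precise} converts this into $\lambda_{\downarrow 2}(BDB^{-1})\geq q(\delta,M)$. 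For the \emph{upper bound}, let $L':=\lambda_{\uparrow 2}(\LAP(W'))/\nu$, so $\lambda_{\uparrow 2}(\Lap)\geq L'$. If $L'\leq\lambda_{*}$ and $\lambda_{\uparrow 2}(\Lap)\leq\lambda_{*}$, decreasing monotonicity of $q$ on $[0,\lambda_{*}]$ gives $q(\delta,\lambda_{\uparrow 2}(\Lap))\leq q(\delta,L')$. Otherwise $\lambda_{\uparrow 2}(\Lap)>\lambda_{*}$, which forces $q(\delta,\lambda_{\uparrow 2}(\Lap))\leq -2$; and if moreover $L'>\lambda_{*}$ then $q(\delta,L')\leq -2$ as well, so the right side of the claim reduces to $-2$ trivially. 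In every sub-case $\max\{q(\delta,\lambda_{\uparrow 2}(\Lap)),-2\}\leq\max\{q(\delta,L'),-2\}$, and Theorem~\ref{precise} closes the argument.

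I do not expect any step to be a genuine obstacle: the only care needed is in respecting the change in monotonicity of $q$ at $\lambda_{*}$, which is precisely why the three-way $\min$ (and the $\max\{\cdot,-2\}$ on the upper side) must appear in the statement---the $\lambda_{*}$ entry on the lower side is what prevents the argument from landing on the increasing branch of the parabola, where replacing $\lambda_{\uparrow 2}(\Lap)$ by the larger $M$ would reverse the inequality.
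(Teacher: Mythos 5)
Your proof is correct and follows essentially the same route as the paper's: both combine Eq.~\eqref{bdb-lap} (Theorem~\ref{precise}) with Lemma~\ref{lap:stab} applied in both directions (via symmetry of $\nu$), the trace bound $\lambda_{\uparrow 2}(\Lap)\leq 1+\frac{1}{n-1}$, and a case analysis around the vertex of $q(\delta,\cdot)$ at $1+\frac{1}{2\delta}$. The only difference is organizational (you split cases on $\lambda_{\uparrow 2}(\Lap)$ versus $\lambda_*$, while the paper argues the upper bound via the threshold $-2$), which does not change the substance.
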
 \begin{proof}
 For the first inequality in the Lemma, recall $
 \lambda_{\downarrow 2}(B D B^{-1}) \geq q(\delta,\lambda_{\uparrow 2}(\Lap))$ from Eq.~\eqref{bdb-lap};
 also note that
  $q(\delta,\lambda)$ is monotone decreasing in $\lambda$ until the global minimum at $1+\frac{1}{2\delta}$. We have two upper bounds on $  \lambda_{\uparrow 2}(\Lap)$:
$  \lambda_{\uparrow 2}(\Lap) \leq
 \nu \lambda_{\uparrow 2}(\LAP(W'))$ from Lemma~\ref{lap:stab}, and
  $  \lambda_{\uparrow 2}(\Lap) \leq 1+\frac{1}{n-1}$ because $\Tr(\Lap)\leq n$. Consequently
  $ \lambda_{\downarrow 2}(B D B^{-1}) \geq q(\delta, \min\{\nu \lambda_{\uparrow 2}(\LAP(W')),1+\frac{1}{2\delta},1+\frac{1}{n-1}\})$.

For the second inequality, recall
$ \lambda_{\downarrow 2}(B D B^{-1})
\leq \max\{
q(\delta,\lambda_{\uparrow 2}(\Lap )), -2\}$ from Eq.~\eqref{bdb-lap}. If $ \lambda_{\downarrow 2}(B D B^{-1}) > -2$ then necessarily
$q(\delta,\lambda_{\uparrow 2}(\Lap ))> -2$, and then we must have $\lambda_{\uparrow 2}(\Lap )<\min\{1/\delta,2\}$. This implies $q$ is monotone in the interval $[0,\lambda_{\uparrow 2}(\Lap )]$; then applying
$\lambda_{\uparrow 2}(\LAP(W'))/\nu \leq
\lambda_{\uparrow 2}(\Lap)$
from Lemma~\ref{lap:stab},
we find
$q(\delta,\lambda_{\uparrow 2}(\Lap )) \leq q(\delta, \lambda_{\uparrow 2}(\LAP(W'))/\nu)$.
\end{proof}
\begin{remark} An equivalent form of the upper bound in Lemma~\ref{any-nu} is
\[ \max\{q(\delta, \lambda_{\uparrow 2}(\LAP(W'))/\nu),-2\} =
 q(\delta,\min\{\lambda_{\uparrow 2}(\LAP(W'))/\nu, 1/\delta\}). \]
\end{remark}
\begin{remark} In most cases of interest the bounds in Lemma~\ref{any-nu} are determined by the $\lambda_{\uparrow 2}(\LAP(W'))$ term. For $\delta\leq 1/2$ $q$ is monotone in $[0,2]$, so this is guaranteed. Even outside this range ``most'' graphs will have $\lambda_{\uparrow 2}(\LAP(W'))$ small enough for this to hold (unless $\nu$ is very large; but then the comparison theorem is of course rather weak to begin with). Nevertheless it is worth pointing out an example, even with $\nu=1$, when the bound is not determined by $\lambda_{\uparrow 2}(\LAP(W'))$. Take the complete bipartite graph $K_{2,2}$. Its Laplacian spectrum is $\{0,1,1,2\}$. For $\delta>1/2$ the critical eigenvalue here is not $\lambda_{\uparrow 2}(\LAP(K_{2,2}))=1$, but $\lambda_{\uparrow 4}(\LAP(K_{2,2}))=2$, and correspondingly the damping rate is $-2$.
\end{remark}

\begin{lemma} $\nu(W,U) \leq \tpsi \gamma^{2+\delta}$ and
 $\nu(W,\RG)\leq \gamma^{1+\delta}$. \end{lemma}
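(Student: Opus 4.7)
The plan is to reduce both bounds to direct bounds on the edge weights of $W$, using the factorization $W = B\ell B$ established in Eq.~\eqref{WfromB} together with the alternative form of $B$ in Eq.~\eqref{Bagain}. Since both $U$ and $\RG$ share the same support as $W$ (namely the edges $i\sim j$ of the market graph), $\nu(W,\cdot)$ reduces to a ratio of max over min of $W_{ij}/U_{ij}$ and $W_{ij}/\RG_{ij}$ over edges.

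First I would substitute Eq.~\eqref{Bagain} into $W_{ij} = B_{ii} \ell_{ij} B_{jj}$ to obtain
\[ W_{ij} = \sqrt{r_i r_j} \cdot \frac{r_{i_0}^{1+\delta}}{R_{i_0}} \cdot \ell_{ij} \qquad \text{for } i \sim j. \]
The factor $r_{i_0}^{1+\delta}/R_{i_0}$ is a single positive constant independent of the edge, so it cancels when we form $\nu$. For the second bound I note that $W_{ij}/\RG_{ij}$ equals this constant times $\ell_{ij}$; by Lemma~\ref{lbd} the quantity $\ell_{ij}$ ranges in $[\gamma^{-1-\delta},1]$ across edges, so the ratio of max to min of $W_{ij}/\RG_{ij}$ is at most $\gamma^{1+\delta}$. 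This directly gives $\nu(W,\RG) \leq \gamma^{1+\delta}$.

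For the first bound, $W_{ij}/U_{ij} = W_{ij}$ on the edges, so I need the spread of $W_{ij}$ itself. The factor $\sqrt{r_i r_j}$ lies in $[r_{\min},r_{\max}]$, and Lemma~\ref{lbd} again controls $\ell_{ij}$; multiplying these, the ratio of max to min of $W_{ij}$ is at most $(r_{\max}/r_{\min}) \cdot \gamma^{1+\delta}$. The remaining task is to bound $r_{\max}/r_{\min}$, and this is exactly the remark at the end of Section~\ref{seq:eq}, which asserts that in Arrow--Debreu equilibrium any two prices lie within a factor $\gamma\tpsi$ of each other (a consequence of the fixed-point argument in Lemma~\ref{det-balG}, where $f$ was shown to map the set $K = \{p : |p|=1,\ \tilde p \leq \gamma\tpsi\}$ into itself). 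Combining, $\nu(W,U) \leq \gamma\tpsi \cdot \gamma^{1+\delta} = \tpsi\gamma^{2+\delta}$.

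There is no real obstacle here: both inequalities follow by direct substitution together with the previously established bounds on $\ell_{ij}$ (Lemma~\ref{lbd}) and on the range of equilibrium prices. The only small point to verify is that the constant prefactor $r_{i_0}^{1+\delta}/R_{i_0}$ really is common to all edges and hence drops out of $\nu$, which is immediate from the fact that $i_0$ was fixed once and for all in the definition \eqref{psi-defn}.
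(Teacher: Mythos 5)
Your proposal is correct and follows essentially the same route as the paper: both use $W=B\ell B$ with Eq.~\eqref{Bagain} to sandwich $W_{ij}$ between $\frac{r_{i_0}^{1+\delta}}{R_{i_0}}\sqrt{r_ir_j}\,\gamma^{-1-\delta}$ and $\frac{r_{i_0}^{1+\delta}}{R_{i_0}}\sqrt{r_ir_j}$ via Lemma~\ref{lbd}, then invoke the $\gamma\tpsi$ price-range bound from Section~\ref{seq:eq} for the $\nu(W,U)$ estimate and cancel the common prefactor for $\nu(W,\RG)$. No gaps.
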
 \begin{proof}
Consider the entries of the weighted adjacency matrix, $W=B\ell B$. Applying Lemma~\ref{lbd} and Eq.~\eqref{Bagain},
\be
\frac{ r_{i_0}^{1+\delta}}{R_{i_0}}
\sqrt{r_ir_j} \gamma^{-1-\delta}
\leq W_{ij} \leq
\frac{ r_{i_0}^{1+\delta}}{R_{i_0}}
\sqrt{r_ir_j}
\label{WvsEqbm} \ee
 Earlier (Section~\ref{seq:eq}) we bounded the variation in prices in terms of $\gamma \tpsi$, and so we have that if
$W_{ij}\neq 0$ then for any $i',j'$:
$W_{i'j'}/W_{ij} \leq \tpsi \gamma^{2+\delta}$.
This implies the first bound in the lemma.
The second bound in the lemma follows immediately from Eq.~\eqref{WvsEqbm}.\end{proof} This completes the proof of Theorem~\ref{MarketVsUnwghtd}. \end{proof}

\begin{remark}[on the necessity of the dependences on $\gamma$ and $\tpsi$ in Theorem~\ref{MarketVsUnwghtd}]
First, concerning $\gamma$: it is clear that the bounds in Theorem~\ref{MarketVsUnwghtd} must depend on $\gamma$ because very ``weak'' edges, those expressing little interest of participant $i$ in good $j$, are in the unweighted graph indistinguishable from any other edge; their presence must therefore weaken the quality of the result that we can obtain from knowing only the unweighted graph. Weak edges express themselves in our parameters by forcing $\gamma$ to be large.

Next, concerning $\tpsi$: the main difference in the strengths of Part~(\ref{PartUnw}) and Part~(\ref{PartEq}) is that in the latter we do not lose the factor of $\tpsi$ due to the disparity in prices.
 One might ask whether the factor of $\tpsi$ in the bound is an artifact of the analysis. The answer is that it is not: the dependence on $\tpsi$ in Part~(\ref{PartUnw}) is unavoidable. In markets with very unbalanced prices, even if $\gamma$ is bounded, the Laplacian $\LAP(U)$ of the unweighted graph can be an exponentially poor proxy for the actual market Laplacian $\Lap$. We now show this.

\textit{Example showing exponential gap in damping rate between $\Lap$ and $\LAP(U)$.} 
Our example uses $\delta=1$. Fix any $A>1$ and create a market among participants $-n,\ldots,n$ arranged in a chain, so that $C_{ij}>0$ if and only if $|i-j|\leq 1$. We will show how to set the coefficients $C_{ij}$ in a bounded range so that $r_i=A^{|i|}$. We describe the $C_{ij}$'s for $i\geq 0$; the construction will be symmetric about the origin.

Then Eq.~\eqref{r-soln} is, first at $0$, then at $0<i<n$, and then at $n$:
\bea
\frac{C_{01}}{1+2C_{01}A^{-1}}&=&\frac{A^2C_{10}}{C_{10}+A^{-1}+C_{12}A^{-2}} \label{0i} \\
\frac{C_{i,i+1}A^{2i}}{A^{1-i}C_{i,i-1}+A^{-i}+C_{i,i+1}A^{-i-1}} &=&
   \frac{A^{2i+2}C_{i+1,i}}{A^{-i}C_{i+1,i}+A^{-i-1}+C_{i+1,i+2}A^{-i-2}} \label{1in} \\
\frac{C_{n-1,n}A^{2n-2}}{A^{2-n}C_{n-1,n-2}+A^{-n+1}+C_{n-1,n}A^{-n}} &=&
   \frac{A^{2n}C_{n,n-1}}{A^{-n+1}C_{n,n-1}+A^{-n}} \label{in}
   \eea
Next specialize to taking all $C_{i,i+1}=A$. Then Eq.~\eqref{1in} becomes
\bean
\frac{A^{2i+1}}{A^{1-i}C_{i,i-1}+2A^{-i}} &=&
   \frac{A^{2i+2}C_{i+1,i}}{A^{-i}C_{i+1,i}+2A^{-i-1}}
   \eean
 It turns out that $C_{i+1,i}$ converges rapidly to $A^{-2}$ which we can see from writing $C_{i,i-1}=A^{-2}+\alpha_i$, $C_{i+1,i}=A^{-2}+\alpha_{i+1}$, and deriving the recurrence $\alpha_{i+1}=g(\alpha_i)$ where
\[ g(x):=\frac{-x}{A^2+2A} \]
maps the interval $(-1/A^2,1)$ into itself.

It remains only to show that the boundary conditions can be satisfied consistent with these choices. In Eq.~\eqref{0i}, which becomes $A/3=A^2C_{10}/(C_{10}+2/A)$, we have
$C_{10}=\frac{2}{3A^2-A}$, which lies in the interval $(-1/A^2,1)$ for any $A>1$. It remains only to see that there is a positive solution to Eq.~\eqref{in}, which becomes $A^{n-1}/(A^{3-n}+2A^{1-n})=A^nC_{n,n-1}/(A^{1-n}C_{n,n-1}+A^{-n})$ and is solved by $C_{n,n-1}=\frac{1}{A^4+2A^2-A}$ which is indeed bounded away from $0$.

Now that we have such a simple representation for the equilibrium prices, we can examine the weighted graph. Note that all nonzero entries of $\ell$ are within a constant factor (depending on $A$) of $1$.
From Eqs.~\eqref{Bagain} (with $i_0=0$) and~\eqref{WfromB} we see that $R_{i_0}=3$, $B_{kk}=\sqrt{A^{|k|}/3}$, and the weight of edge $i \sim i+1$ is within a constant factor of  $A^{|i|+1/2}$. Therefore, splitting this graph about the origin, we see that  its conductance is proportional to $A^{-n}$. From the Cheeger inequalities~\cite{AlonM85,Alon86,SinclairJ89} we can conclude that the algebraic connectivity, too, is exponentially small in $n$. The algebraic connectivity of the unweighted chain, by contrast, is proportional to $1/n^2$.

Thus, in this market, price equilibration is exponentially slower than that of a market that has the same connectivity structure but in which all goods have the same price. 
\end{remark}

\section{Markets subject to noise} \label{noise}
The obvious implication of the damping rate is that it tells how quickly an isolated market that has been disrupted, will converge
 back to equilibrium. But a more interesting implication concerns markets which are not isolated, but instead can be modeled as continually disturbed by noise. In this case, as we now show, the damping rate, rather than telling us something about the dynamics of the market, tells us instead about its steady state---which is no longer a single equilibrium point, but a probability distribution over the price space.

We do not attempt a ``realistic'' treatment of noise. We analyze a noise model chosen for analytic convenience and speculate that the qualitative predictions will hold for other light-tailed noise models.
The noise model is additive diffusion in the $\beta$ basis. That is, $\ket{\beta}$ undergoes a combination of the deterministic evolution derived in Eq.~\eqref{betadyn}, plus a (spherically symmetric) Brownian motion term. The description of the system at any time is a probability distribution over $\ket{\beta}$. The deterministic evolution term acts to contract the distribution toward the equilibrium ray of prices; while the diffusion term acts as an entropic force resisting over-contraction. The combined process diagonalizes in the basis of eigenvectors of $\Lap$, and therefore of $D$; so we may, for any eigenvector $v$ of $D$ with eigenvalue $\lambda$, replace
the deterministic dynamics
\[ \dot{x}=\lambda x \]
(in view of Proposition~\ref{L-expansion} this is the scalar restriction of the dynamics
$ \ket{\dot{\beta}}=B D B^{-1} \ket{\beta} $ from Eq.~\eqref{betadyn}, with
$x$ being the projection of the current state on the eigenvector $v$),
by stochastic dynamics on the probability distribution (denoted $F$) over $x$:
\be \frac{\partial F(t,x)}{\partial t} = \frac{\kappa^2}{2} \frac{\partial^2}{\partial^2 x} F(t,x) -
 \lambda \frac{\partial}{\partial x} (xF(t,x))
\label{OU} \ee 
where $\kappa$ measures the intensity of the noise.

Eq.~\eqref{OU} is the Fokker-Planck PDE describing the
 Ornstein-Uhlenbeck process~\cite{OU30,Risken89}. $F(x,t)$ converges to the Gaussian density
\[ \sqrt{\frac{-\lambda}{\pi \kappa^2}} \;  e^{\lambda x^2/\kappa^2}
\]
 Thus the stationary distribution along the $v$ eigenvector is Gaussian about the origin with standard deviation $\sqrt{-\kappa^2/2\lambda}$. Considering the slowest mode of the system, we have shown Theorem~\ref{steady}. (The clause bounding prices in the theorem is necessary because the fixed noise rate is being applied in the $\beta$ basis and the operator norm of that change of basis is bounded in terms of the range of prices, as we see from Eq.~\eqref{Bagain}.)

Normalizing the noise rate  to $\kappa^2/2=1$, we conclude that
 in a market with damping rate $\lambda$, at any given time the prices are likely to be deviating by about $\sqrt{-1/\lambda}$ from their equilibrium values. Note that, fixing any $\delta$, even for a network in which all $C_{ij}$ are $0$ or $1$ and every node has only a constant number of neighbors, $-\lambda$ may range between a constant (independent of $n$) and inverse-quadratic in $n$. Consequently the quality of connectivity of the market, even in markets of such limited form, may have an effect as large as linear in the number of agents, upon the size of typical price deviations from equilibrium.

This confirms and quantifies the expectation that in a rapidly
 damping ($\lambda \ll 0$) market the equilibrium solution is highly predictive; while it is less so in a slowly damping ($\lambda $ only slightly less than $0$) market.

\bibliographystyle{plain}
\bibliography{refs}
\end{document}